\documentclass[journal]{IEEEtran}
\usepackage{xargs}
\usepackage{soul}
\usepackage{cancel}
\usepackage{mathtools}
\usepackage{verbatim}
\usepackage{enumitem}
\usepackage{graphics}
\usepackage{epsfig} 
\usepackage{amsmath,lipsum} 
\usepackage{amssymb}  
\usepackage{amsthm}
\usepackage{xfrac}
\usepackage[pdfborder={0 0 0}]{hyperref}
\graphicspath{{figures/}}
\newcommand {\matr}[2]{\left[\begin{array}{#1}#2\end{array}\right]}

\newcommand{\x}{{\mathbf{x}}}
\newcommand{\y}{{\mathbf{y}}}
\newcommand{\z}{{\mathbf{z}}}
\renewcommand{\u}{{\mathbf{u}}}
\newcommand{\vv}{{{v}}}
\newcommand{\w}{{\mathbf{w}}}
\renewcommand{\r}{{\mathbf{r}}}

\newcommand{\Wb}{\mathcal{{W}}}

\newcommand{\bx}{{\x}}
\newcommand{\bu}{{\u}}

\newcommand{\btau}{{\tau}}
\newcommand{\bv}{{\vv}}

\usepackage[usenames,dvipsnames]{xcolor}
\definecolor{wheat}{rgb}{0.96,0.87,0.70}
\definecolor{mario}{rgb}{0.8,0.8,1}
\definecolor{ivo}{rgb}{1,0.8,0.8}

\newcommand{\review}[1]{#1}

\newcommandx{\xb}[2][1=n,2=k]{\x_{#1|#2}}

\newcommandx{\zb}[2][1=n,2=k]{\bar\z_{#1|#2}}
\newcommandx{\ub}[2][1=n,2=k]{\u_{#1|#2}}
\newcommandx{\yb}[2][1=n,2=k]{\bar\y_{#1|#2}}
\newcommandx{\vb}[2][1=n,2=k]{\vv_{#1|#2}}
\newcommandx{\wb}[2][1=n,2=k]{\w_{#1|#2}}
\newcommandx{\rx}{\r^{\x}}
\newcommandx{\ru}{\r^{\u}}
\newcommandx{\tb}[2][1=n,2=k]{\tau_{#1|#2}}

\newcommandx{\xt}[2][1=n,2=k]{\tilde\x_{#1|#2}}
\newcommandx{\ut}[2][1=n,2=k]{\tilde\u_{#1|#2}}
\newcommandx{\vt}[2][1=n,2=k]{\tilde\vv_{#1|#2}}
\newcommandx{\tildet}[2][1=n,2=k]{\tilde\tau_{#1|#2}}

\newcommandx{\rbx}[2][1=n,2=k]{\r_{#1|#2}^{\x}}
\newcommandx{\rbu}[2][1=n,2=k]{\r_{#1|#2}^{\u}}

\newcommandx{\hb}[3][1=n,2=k,3={}]{h_{#1}^{#3}}
\newcommandx{\gb}[3][1=n,2=k,3={}]{g_{#1|#2}^{#3}}
\newcommandx{\gbar}[3][1=n,2=k,3={}]{\bar{g}_{#1|#2}^{#3}}
\newcommandx{\xT}[2][1=n,2=k]{\mathcal{X}_{#1|#2}}
\newcommandx{\sigb}[3][1=n,2=k,3={}]{\bar\sigma_{#1|#2}^{#3}}
\usepackage{tikz}
\usepackage{soul}

\newtheorem{Theorem}{Theorem}
\newtheorem{Lemma}{Lemma}

\newtheorem{Assumption}{Assumption}
\newtheorem{Definition}{Definition}
\newtheorem{Remark}{Remark}
\newtheorem{Example}{Example}

\hyphenation{op-tical net-works semi-conduc-tor}

\begin{document}
	\title{Safe Trajectory Tracking in Uncertain Environments}
	
	\author{Ivo~Batkovic$^{1,2}$,~Paolo~Falcone$^{2,3}$,~Mohammad~Ali$^1$,~and~Mario~Zanon$^4$
		
		\thanks{This work was partially supported by the Wallenberg Artificial Intelligence, Autonomous Systems and Software Program (WASP) funded by Knut and Alice Wallenberg Foundation, and by the COPPLAR project (VINNOVA. V.P. Grant No. 2015-04849).
		}	
		\thanks{$^{1}$ Ivo Batkovic and Paolo Falcone are with the Mechatronics group at the Department of Electrical Engineering, Chalmers University of Technology, Gothenburg, Sweden {\tt\footnotesize \{ivo.batkovic,falcone\}@chalmers.se }}
		\thanks{$^{2}$ Ivo Batkovic, and Mohammad Ali are with the research department at Zenseact AB {\tt\footnotesize \{ivo.batkovic,mohammad.ali\}@zenuity.com}}
		\thanks{$^{3}$ Paolo Falcone is with the Dipartimento di Ingegneria ``Enzo Ferrari'' Universit\`a di Modena e Reggio Emilia, Italy {\tt\footnotesize falcone@unimore.it}}
		\thanks{$^{4}$ Mario Zanon is with the IMT School for Advanced Studies Lucca {\tt\footnotesize mario.zanon@imtlucca.it}}}

	\maketitle
	
	\begin{abstract}
		In Model Predictive Control~(MPC) formulations of trajectory tracking problems, infeasible reference trajectories and a-priori unknown constraints can lead to cumbersome designs, aggressive tracking, and loss of recursive feasibility. This is the case, for example, in trajectory tracking applications for mobile systems in the presence of constraints which are not fully known a-priori. In this paper, we propose a new framework called Model Predictive Flexible trajectory Tracking Control~(MPFTC), which relaxes the trajectory tracking requirement. Additionally, we accommodate recursive feasibility in the presence of a-priori unknown constraints, which might render the reference trajectory infeasible. In the proposed framework, constraint satisfaction is guaranteed at all times while the reference trajectory is tracked as good as constraint satisfaction allows, thus simplifying the controller design and reducing possibly aggressive tracking behavior. The proposed framework is illustrated with three numerical examples. 
	\end{abstract}
	
	\begin{IEEEkeywords}
		flexible trajectory tracking, nonlinear MPC, safety, uncertain constraints, stability, recursive feasibility
	\end{IEEEkeywords}
	
	\IEEEpeerreviewmaketitle
	
	\section{Introduction}\label{sec:intro}
	Model Predictive Control~(MPC) is an advanced control technique for linear, or nonlinear systems, that has been made successful by the possibility of introducing time-varying references with preview information as well as constraints.
	Standard MPC formulations penalize deviations from a setpoint or a (feasible) reference trajectory and stability and recursive feasibility guarantees have been derived for such settings~\cite{Mayne2000,rawlings2009model,borrelli2017predictive}. 
	However, in practice (a) not all constraints, that the real system could be subject to, are available at the design stage, and (b) a reference trajectory which satisfies all constraints might therefore not be available. This is the case, for example, in trajectory tracking applications for autonomous vehicles, see, e.g.,~\cite{ljungqvist2016path,ljungqvist2018stability,andersson2018receding,ljungqvist2019path}. For such problems, it would be convenient to use a reference trajectory that is easy to generate, and that is not necessarily feasible for all constraints, while detecting and enforcing a-priori unknown constraints online at all times. Existing literature on motion planning in dynamic environments~\cite{paden2016survey,mohanan2018survey} proposes ways of tackling (b), where graph-based approaches have been rather successful~\cite{fiorini1998motion,van2005roadmap,fulgenzi2008probabilistic}. In particular, in~\cite{andersson2018receding} the authors presented a unified framework, where the motion planning and control problems are solved in order to control and guide a quadcopter in various configurations with multiple moving obstacles. The authors show the benefit of being able to plan new trajectories in such environments, however, no formal safety (recursive feasibility) guarantees are provided.
			
	In this paper, we take a different point of view and deal with the presence of a-priori unknown constraints by constructing a new MPC framework which ensures safety, i.e., recursive feasibility of the system, while only requiring mild assumptions on the reference trajectory, which needs not be feasible with respect to all constraints. Consequently, we do provide safety guarantees without the need of the additional layer of motion planning. Nevertheless, our approach can still be combined with motion planning if this is deemed beneficial, e.g., to improve performance.
	
	In standard MPC, the presence of constraints, can cause the system to slow or stop, while the reference trajectory does not: in this case, as the system gets far from the reference, an undesirably aggressive behavior is obtained. On the other hand, while safety and stability are the most important requirements, a smooth, or at least not excessively aggressive, behavior is also desirable. To alleviate aggressive behaviors introduced by infeasible reference trajectories, Model Predictive Path Following Control (MPFC) has been proposed in~\cite{aguiar2005path,Faulwasser2009,kanjanawanishkul2009path,alessandretti2013trajectory,Faulwasser2016,faulwasser_implementation} by penalizing deviations from a reference path instead of a trajectory, where additional variables are introduced in order to control the position along the reference path.  The main difficulty in MPFC is the need to select an appropriate output function to define the path typically in a dimension lower than the system state space. 
	Although the results in this paper can be combined with MPFC, we also propose an alternative strategy which we call Model Predictive Flexible trajectory Tracking Control (MPFTC). Based on ideas similar to MPFC, we introduce new variables to artificially modify the time derivative of the reference trajectory in a time warping fashion. Moreover, we also discuss safety in a general sense, and give safety guarantees which build on the assumption that a safe set exists, where all (including the unknown) constraints are satisfied at all time: this is an assumption often made in practice  for stable systems at rest, as long as a safe configuration can be found~\cite{petti2005safe},~\cite{liu2017provably,beckert2017online}.

	 In order to illustrate in detail the theoretical developments, we consider two toy examples and compare our formulation with existing ones. We then design an MPFTC controller for a robotic arm, which has to follow a trajectory while avoiding an a-priori unknown obstacle. The main contributions of this paper can be summarized as follows: (a) the introduction of the MPFTC framework for flexible trajectory tracking with stability guarantees; and (b) the development of a safe framework for satisfaction of a-priori unknown constraints. 
	
	This paper is structured as follows. In Section \ref{sec:problem_description} we outline the flexible trajectory tracking problem and in Section \ref{sec:mpftc}  we prove stability for MPFTC. In Section \ref{sec:safe_mpftc} we introduce a framework with recursive feasibility guarantees for a-priori unknown constraints. We illustrate the theoretical developments in Section \ref{sec:simulations} with three numerical examples. Finally, we draw conclusions in Section \ref{sec:conclusions}.

	\subsection{Notation}
	We denote a discrete-time nonlinear system by
	\begin{equation}\label{eq:sys}
	\x_{k+1}=f(\x_k,\u_k),
	\end{equation}
	where $\x_k\in\mathbb{R}^{n_x}$ and $\u_k\in\mathbb{R}^{n_u}$ are the state and input vectors at time $k$, respectively. The state and inputs are subject to two categories of constraints: a-priori known constraints $h(\x,\u):\mathbb{R}^{n_x}\times\mathbb{R}^{n_u}\rightarrow\mathbb{R}^{n_h}$; and a-priori unknown constraints $g(\x,\u):\mathbb{R}^{n_x}\times\mathbb{R}^{n_u}\rightarrow\mathbb{R}^{n_g}$, i.e., the state and inputs must satisfy $h(\x,\u)\leq{}0$ and $g(\x,\u)\leq{}0$, where the inequalities are defined element-wise.
	
	We use the notation $\gb[n][k](\x,\u)$ to denote $g$ at time $n$, given the information available at time $k$. Moreover, we will denote by $g_n(\x,\u):=g_{n|\infty}(\x,\u)$ the real constraint, since in general $\gb[n][k](\x,\u)\neq g_n(\x,\u)$. Note that for a-priori known constraints $h_{n|k}(\x,\u):=h_n(\x,\u)$ holds by definition. We apply the same notation to state and inputs, e.g., $\x_{n|k}$ and $\u_{n|k}$ denote the predicted state and input at time $n$ given the current time $k$. In addition, to denote a set of integers, we use $\mathbb{I}_{a}^{b} := \{a,a+1,...,b\}$.

	\section{Problem Description}\label{sec:problem_description}
	
	Our aim is to control system~\eqref{eq:sys} such that both known constraints $h(\x,\u)\leq{}0$ and a-priori unknown constraints $g(\x,\u)\leq{}0$ are satisfied at all times. A situation in which some constraints cannot be known a priori occurs in the context of trajectory tracking for mobile robots, where $h$ includes, e.g., actuator saturations and/or imposes the avoidance of collision with fixed and known obstacles, while $g$ includes, e.g., moving obstacles present in the environment, whose motion trajectories are not known a-priori, but can be (over)estimated online based on measurements.
	
	Our first and essential objective is to guarantee safety of~\eqref{eq:sys}, which we define formally as
	
	\begin{Definition}[Safety]\label{def:safe}
		A controller is said to be safe in a given set $\mathcal{S}\subseteq\mathbb{R}^{n_x}$ if $\forall \, \x\in\mathcal{S}$ it generates control inputs $\mathbf{U}=\{\u_0,...,\u_\infty\}$ and corresponding state trajectories $\mathbf{X}=\{\x_0,\x_1,...,\x_\infty\}$ such that $h_k(\x_k,\u_k)\leq{}0$ and~$g_k(\x_k,\u_k)\leq{}0$, $\forall \, k \geq 0$.
	\end{Definition}
	\review{Our second objective is to control the system such that the state and input $\x_k, \u_k$ track a parametrized reference trajectory $\r(\tau)\review{:=}(\r^\x(\tau),\r^\u(\tau))$ as closely as safety allows.} If the reference parameter $\tau$ is selected to be time, its natural dynamics are given by
	\begin{equation}
	\label{eq:tau_natural}
	\tau_{k+1} = \tau_k + t_\mathrm{s},
	\end{equation}
	where $t_\mathrm{s}$ is the sampling time for sampled-data systems and $t_\mathrm{s}=1$ in the discrete-time framework. Given the presence of nonlinear dynamics and constraints, we frame the problem in the context of MPC.
	Note that if $\tau$ is forced to follow its natural dynamics~\eqref{eq:tau_natural}, then the reference tracking problem in the absence of a-priori unknown constraints $g$ is a standard MPC problem and, therefore, inherits all stability guarantees, but also a possibly aggressive behavior when the initial state is far from the reference.
	
	Approaches developed for mechanical systems in presence of large tracking errors, especially caused by reference setpoint changes, have been proposed in, e.g.,~\cite{Gros2017a}. In the setting we consider, however, perfect tracking will not be impeded by sudden setpoint changes, but rather by the presence of constraints, e.g., a mobile system might have to temporarily stop in order to avoid collisions with other systems or obstacles. Therefore, we investigate complementary approaches to those proposed in~\cite{Gros2017a}.

	One family of approaches for smooth reference tracking is the so-called MPFC~\cite{Faulwasser2009,Faulwasser2016}. While MPFC is a valid technique for tackling our problem we propose a new, alternative, approach: MPFTC, which 
	solves the problem of tracking an infeasible reference trajectory when the presence of constraints~$g(\x,\u)\leq0$ might force the system to temporarily deviate from the reference. 
	While the main difficulty in MPFC is to establish a suitable output $y=\phi(\x,\u)$ and the corresponding path, the main difficulty in MPFTC will be to pre-compute a parametrized feasible reference.

	\section{Model Predictive Flexible Tracking Control}\label{sec:mpftc}
	The main idea in MPFTC is to avoid aggressive behaviors by adapting the \review{dynamics of the} reference trajectory by means of a parameter $\tau$, which acts as a fictitious time, through relaxed dynamics given by
	\begin{equation}
		\label{eq:tau_controlled}
		\tau_{k+1} = \tau_k + t_\mathrm{s} + v_k,
	\end{equation}
	where $v$ is an additional auxiliary control input and $\tau$ becomes an auxiliary state. \review{Note that the system dynamics are unchanged and the fictitious time $\tau$ makes only the reference dynamics deviate from the natural ones. 
	}

	We formulate the MPFTC problem as the following MPC problem
	\begin{subequations}
		\label{eq:nmpc}
		\begin{align}
		\begin{split}
		\hspace{-0.5em}V(\x_k,\tau_k):=&\min_{\substack{\bx\\\btau},\substack{\bu\\\bv}}  \sum_{n=k}^{k+N-1}
		q_\r(\xb,\ub,\tb)+w \vb^2 \hspace{-20em}\\
		&\qquad\qquad+p_\r(\xb[k+N],\tb[k+N])\label{eq:nmpc_cost} \hspace{-20em}
		\end{split}\\
		\text{s.t.}\ &\xb[k][k] = \x_{k},\  \tb[k] = \tau_{k}\label{eq:nmpcState}, \\
		&\xb[n+1] = f(\xb,\ub),\label{eq:nmpcDynamics} & n\in \mathbb{I}_k^{k+N-1},\\
		&\tb[n+1] = \tb+t_\mathrm{s}+\vb, \hspace{-0.5em}& n\in \mathbb{I}_k^{k+N-1},\label{eq:tauDynamics}\\
		&\hb(\xb,\ub) \leq{} 0, \label{eq:nmpcInequality_known}& n\in \mathbb{I}_k^{k+N-1},\\
		&\gb[n][k](\xb,\ub) \leq{} 0, 
		\label{eq:nmpcInequality_unknown}& n\in \mathbb{I}_k^{k+N-1},\\
		&\xb[k+N] \in\mathcal{X}^\mathrm{f}_\r(\tb[k+N])\label{eq:nmpcTerminal},
		\end{align}
	\end{subequations}
	where $k$ is the current  time, and $N$ is the prediction horizon. In tracking MPC, typical choices for the stage and terminal costs are
	\begin{gather}
	\Delta\xb := \xb-\rx(\tb),\quad \Delta\ub:=\ub-\ru(\tb), \nonumber\\
	q_\r(\xb,\ub,\tb) = \matr{c}{\Delta\xb\\\Delta\ub}^\top{}W\matr{c}{\Delta\xb\\\Delta\ub},\label{eq:stage_cost}\\
	p_\r(\xb[k+N],\tb[k+N]) = \Delta\xb[k+N]^\top{}P\Delta\xb[k+N],\label{eq:terminal_cost}
	\end{gather}
	where $\r(\tb)=(\rx(\tb),\ru(\tb))$ is a user-provided reference trajectory. \review{Note that the cost functions $q_\r$ and $p_\r$ depend on $\tb$ only through the reference trajectory}. The matrices $W\in\mathbb{R}^{(n_x+n_u) \times (n_x+n_u)}$ and $P\in\mathbb{R}^{n_x\times n_x}$ are symmetric positive-definite matrices. Note that we use convex quadratic forms for the cost for simplicity, but the proposed framework can accommodate more general cost definitions. 
	The predicted state and controls are defined as $\xb$, $\tb$, and $\ub$, $\vb$ respectively. Constraint \eqref{eq:nmpcState} enforces that the prediction starts at the current states, and constraints \eqref{eq:nmpcDynamics}-\eqref{eq:tauDynamics}  enforce that the predicted trajectories satisfy the system dynamics. Constraints \eqref{eq:nmpcInequality_known} denote known constraints such as, e.g., actuator saturations and reference trajectory bounds, while constraint \eqref{eq:nmpcInequality_unknown} enforces constraints which are not known a-priori 
	like, e.g., constraints imposed to avoid the collision with obstacles detected by a perception layer.
	Finally, constraint \eqref{eq:nmpcTerminal} is  a terminal set, where, differently from standard formulations, the terminal constraint depends on the auxiliary state $\tb[k+N]$ relative to the reference parameter. \review{Note that, while the introduction of one additional state and control results in an increased computational complexity, such increase is typically small, since these variables have decoupled linear dynamics.}
	
	\begin{Remark}
		If the constraint $\vb=0$ is added and constraints~\eqref{eq:nmpcInequality_unknown} are not present, a standard MPC formulation is obtained. The terminal set $\mathcal{X}^\mathrm{f}_\r$ can therefore be designed as in standard MPC, where one assumes that the reference trajectory evolves according to its natural dynamics~\eqref{eq:tau_natural}. The challenges introduced by constraints~\eqref{eq:nmpcInequality_unknown} will be further analyzed in the remainder of the paper.
	\end{Remark}

	In the following, we will first prove stability for MPFTC under the standard assumptions used to prove stability for MPC, i.e., we will assume that~\eqref{eq:nmpcInequality_unknown} is inactive at the reference and does not jeopardize recursive feasibility. 
	Because constraints~\eqref{eq:nmpcInequality_unknown} cannot be known a-priori, these assumptions become unrealistic. However, this issue is typically neglected in the MPC literature, such that guaranteeing recursive feasibility and closed-loop system stability is still an open problem. In Section~\ref{sec:safe_mpftc} we will introduce a framework based on ideas typically used in robust MPC which guarantees recursive feasibility by a slight modification of the standard MPC controller design.

	In order to prove stability, we introduce the following assumptions \review{which coincide with those commonly used in MPC}, see, e.g.,~\cite{rawlings2009model,Grune2011}.
	\begin{Assumption}[System and cost regularity]\label{a:cont}
			The system model $f$ is continuous, and the stage cost \review{$q_\r:\mathbb{R}^{n_x}\times \mathbb{R}^{n_u} \times \mathbb{R}\rightarrow\mathbb{R}_{\geq{}0}$, and terminal cost $p_\r:\mathbb{R}^{n_x}\times \mathbb{R}\rightarrow\mathbb{R}_{\geq{}0}$}, are continuous at the origin and satisfy $q_\r(\rx(\tau),\ru(\tau),\tau)=0$, and $p_\r(\rx(\tau),\tau)=0$. Additionally, $q_\r(\bx_k,\bu_k,\btau_k)\geq{}\alpha_1(\|\bx_k-\rx(\btau_k)\|)$ for all feasible $\x_k$, $\u_k$, and  $p_\r(\bx_N,\btau_N)\leq\alpha_2(\|\bx_N-\rx(\btau_N)\|)$, where $\alpha_1$ and $\alpha_2$ are $\mathcal{K}_\infty$-functions.
	\end{Assumption}
	
	\begin{Assumption}[Reference feasibility] \label{a:rec_ref}
			The reference is feasible for the system  dynamics, i.e., $\r^\x(t+t_\mathrm{s})=f(\r^\x(t),\r^\u(t))$, and: 
			\begin{enumerate}[label={\textbf{\alph*)}}, ref={\ref{a:rec_ref}\alph*}]
				\item \label{a:rec_ref-A}
				the reference satisfies the known constraints \eqref{eq:nmpcInequality_known}, i.e., $\hb(\r^\x(t_n),\r^\u(t_n)) \leq{} 0$, for all $n\in\mathbb{I}_0^\infty$;
				\item \label{a:rec_ref-B}
				the reference satisfies the unknown constraints \eqref{eq:nmpcInequality_unknown}, i.e., $\gb(\r^\x(t_n),\r^\u(t_n))\leq 0$, 
				for all $n,k\in\mathbb{I}_0^\infty$.
		\end{enumerate}
	\end{Assumption}
	Assumption~\ref{a:rec_ref-B} is a strong assumption since it assumes that the reference is feasible for constraints for all future times, i.e., at time $k$ the constraint $g_{n|k+1}$ is also assumed to be satisfied. Therefore, Assumption~\ref{a:rec_ref-A} serves as a relaxed version which is more realistic and will be used later on to replace Assumption~\ref{a:rec_ref-B}.
	\begin{Assumption}[Stabilizing Terminal Conditions] \label{a:terminal}
		There exists a parametric stabilizing terminal set  $\mathcal{X}^\mathrm{f}_\r(t)$ and a terminal control law $\kappa^\mathrm{f}_\r(\mathbf{x},t)$ yielding:
		\begin{align*}
		\mathbf{x}_+^\kappa=f(\mathbf{x},\kappa^\mathrm{f}_\r(\mathbf{x},t)), && t_+ = t + t_\mathrm{s},
		\end{align*}
		such that $p_\r(\mathbf{x}_+^\kappa,t_+) - p_\r(\mathbf{x},t) \leq{} - q_\r(\mathbf{x},\kappa^\mathrm{f}_\r(\mathbf{x},t),t)$, and
		\begin{enumerate}[label={\textbf{\alph*)}}, ref={\ref{a:terminal}\alph*}]
		\item \label{a:terminal-A}$\mathbf{x}\in\mathcal{X}^\mathrm{f}_\r(t)\Rightarrow \mathbf{x}^\kappa_+\in\mathcal{X}^\mathrm{f}_\r(t_+),$ and $\hb(\mathbf{x},\kappa^\mathrm{f}_\r(\mathbf{x},t)) \leq{} 0,$ for all $n,k\in\mathbb{I}_0^\infty$;
		\item \label{a:terminal-B}
		$\x\in\mathcal{X}_\r^\mathrm{f}(t)\Rightarrow g_{n|k}(\x,\kappa_\r^\mathrm{f}(\x,t))\leq{}0$, for all $n,k\in\mathbb{I}_0^\infty$.
	\end{enumerate}
	\end{Assumption}
	Similarly to Assumption~\ref{a:rec_ref-B}, Assumption~\ref{a:terminal-B} is also difficult to verify due to the unknown constraints. Hence, the relaxed version Assumption~\ref{a:terminal-A}, which is standard in MPC settings, will be used later on to replace Assumption~\ref{a:terminal-B}. Finally, we introduce the following assumption, imposing some structure on $g$ that is needed in order to ensure that the feasibility of a solution does not become jeopardized between consecutive time instances.
	
	\begin{Assumption}[Unknown constraint dynamics] \label{a:unknown_constraints}		
		The a-priori unknown constraint functions satisfy $g_{n|k+1}(\xb,\ub) \leq g_{n|k}(\xb,\ub)$, for all $n\geq k$.
	\end{Assumption}
	\review{This assumption essentially requires the availability of a consistent characterization of the a priori unknown constraints, as we will further detail in Section~\ref{sec:coll_avoidance}.}
	
	We are now ready to prove asymptotic stability for the MPFTC framework.
	\begin{Theorem} [Nominal Asymptotic Stability]\label{prop:stab_feas}
		Suppose that Assumptions \ref{a:cont}, \ref{a:rec_ref}, \ref{a:terminal}, and \ref{a:unknown_constraints} hold, 
		and that the initial state $(\x_k,\tau_k)$ at time $k$ belongs to the feasible set of Problem \eqref{eq:nmpc}. Then the system \eqref{eq:sys}-\eqref{eq:tau_controlled} in closed loop with the solution of~\eqref{eq:nmpc} applied in receding horizon is an asymptotically stable system. \label{prop:stable}
		\begin{proof}
		    The first part of the proof follows standard arguments used to prove stability for MPC.
			By assumption, there exists an optimal control input sequence $\mathbf{U}^\star_k=\{\ub[k]^\star,...,\ub[k+N-1]^\star\}$, $\mathbf{V}^\star_k=\{\vb[k]^\star,...,\vb[k+N-1]^\star\}$ and corresponding state trajectory $\mathbf{X}_k^\star = \{\xb[k]^\star,...,\xb[k+N]^\star\}$, $\mathbf{T}_k=\{\tb[k]^\star,...,\tb[k+N]^\star\}$ at the initial time that gives the optimal value function 
			\begin{equation*}
			V(\x_k,\tau_k)=\hspace{-1.1em}\sum_{n=k}^{k+N-1}\hspace{-.9em}q_\r(\xb^\star,\hspace{-0.1em}\ub^\star,\hspace{-0.1em}\tb^\star) + w\vb^{\star{}2}+p_\r(\xb[k+N]^\star,\hspace{-0.1em}\tb^\star).
			\end{equation*}
			Applying the first control inputs $\ub[k]^\star$ and $\vb[k]^\star$, the system and auxiliary states evolve to $\x_{k+1} = f(\x_k,\ub[k]^\star)$ and $\tau_{k+1}=\tau_k+t_\mathrm{s}+\vb[k]^\star$, respectively. The sub-optimal sequences $\mathbf{U}_{k+1}=\{\ub[k+1][k]^\star,\ub[k+2]^\star,...,\kappa^\mathrm{f}_\r(\xb[k+N]^\star,\tb[k+N]^\star)\}$, and $\mathbf{V}_{k+1}=\{\vb[k+1]^\star,\vb[k+2]^\star,...,\vb[k+N]\}$, where $\xb[k+N+1]^\kappa=f(\xb[k+N]^\star,\kappa_\r^\mathrm{f}(\xb[k+N]^\star,\tb[k+N]^\star))$, $\tb[k+N+1]^\kappa=\tb[k+N]^\star+t_\mathrm{s}+\vb[k+N]$, and $\vb[k+N]=0$, are still feasible under Assumptions~\ref{a:terminal} and \ref{a:unknown_constraints}, and yield the associated cost
			\begin{align*}
			\begin{aligned}\begin{split}
			\tilde{V}(\x_{k+1},\tau_{k+1})&=\sum_{n=k+1}^{k+N} q_\r(\xb^\star,\ub^\star,\tb^\star)  + w\vb[n]^{\star{}2}\\&\qquad\qquad+ p_\r(\xb[k+N+1]^\kappa,\tb[k+N+1])
			\end{split}\\
			\end{aligned}\\
			\begin{aligned}
			&=V(\x_k,\tau_k) - q_\r(\xb[k]^\star,\ub[k]^\star,\tb[k]^\star) - w\vb[k]^{\star{}2}\\
			&\quad+ p_\r(\xb[k+N+1]^\kappa,\tb[k+N+1]^\kappa) - p_\r(\xb[k+N]^\star,\tb[k+N]^\star)\\
			&\quad+ q_\r(\xb[k+N]^\star,\kappa^\mathrm{f}_\r(\xb[k+N]^\star,\tb[k+N]^\star),\tb[k+N]^\star).
			\end{aligned}
			\end{align*}
			Using Assumption \ref{a:terminal} and optimality, the optimal value function is shown to decrease between consecutive time instances
			\begin{equation}
			V(\x_{k+1},\tau_{k+1}) 
			\leq{} V(\x_k,\tau_k) - q_\r(\xb[k]^\star,\ub[k]^\star,\tb[k]^\star).
			\end{equation}
			Assumption \ref{a:cont} entails the lower bound 
			\begin{equation}
			V(\x_k,\tau_k)\hspace{-.1em}\geq{}\hspace{-.1em}q_\r(\xb[k]^\star,\ub[k]^\star,\tb[k]^\star)
			\hspace{-.1em}\geq{}\hspace{-.1em}\alpha_1(\|\xb[k]^\star\hspace{-0.3em}-\rx(\tb[k]^\star)\|),
			\end{equation}
			and can be used to prove an upper bound \cite[Proposition 2.17]{rawlings2009model}
			\begin{equation}
			V(\x_k,\tau_k) \leq{} p_\r(\xb[k]^\star,\tb[k]^\star) \leq{} \alpha_2(\|\xb[k]^\star-\rx(\tb[k]^\star)\|).
			\end{equation}
			Therefore, the value function is a Lyapunov function and closed-loop stability follows. 
			
			So far, we have proven that the states track a reference $\r^\x(\tau_k)$ \emph{for some} $\tau_k$. Hence, it has to be shown that~$\vb[k]=0$ asymptotically as well, which is the second and non-standard part of the proof.
			We observe that (a) \review{$\rx(\tau_{k+1}) = f(\rx(\tau_k),\ru(\tau_k))$ for $\tau_{k+1}=\tau_k+t_\mathrm{s}$ holds from Assumption~\ref{a:rec_ref}}, and (b) $q_\r(\bx_k,\bu_k,\btau_k)=0$ and $p_\r(\bx_{k+N},\btau_{k+N})=0 \Rightarrow v_k=0$ by optimality. Consequently, $\lim_{k\to \infty}v_k=0$ and $\lim_{k\to \infty}\tau_{k+1}-\tau_k=t_\mathrm{s}$.
			
			The implication (b) can be proven by noting that, in case $\xb[n]^\star=\rx(\tb[n]^\star)$, $\ub[n]^\star=\ru(\tb[n]^\star)$, then $V(\x_k,\tau_k)=\sum_{n=k}^{k+N-1} w\vb[n]^{\star{}2}$ and, consequently,
			\begin{align*}
			V(\x_{k+1},\tau_{k+1}) &\leq \tilde V(\x_{k+1},\tau_{k+1}) =\sum_{n=k+1}^{k+N} w\vb[n]^{\star{}2}  \\ &\leq{} V(\x_k,\tau_k) - w\vb[k]^{\star{}2} \leq V(\x_k,\tau_k),
			\end{align*}
			where we used $\vb[k+N]=0$. 
		\end{proof}
	\end{Theorem}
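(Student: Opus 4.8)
The plan is to show that the optimal value function $V(\x_k,\tau_k)$ of problem~\eqref{eq:nmpc} is a Lyapunov function for the tracking error $e_k:=\|\x_k-\rx(\tau_k)\|$, by the classical MPC argument, and then to add a short non-standard step proving that the auxiliary input $v_k$ vanishes asymptotically, so that $\tau$ recovers its natural dynamics~\eqref{eq:tau_natural}.

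First I would set up the shifted candidate. Let $(\mathbf{X}_k^\star,\mathbf{T}_k^\star,\mathbf{U}_k^\star,\mathbf{V}_k^\star)$ denote the minimizer at time $k$; applying $\u_k^\star,v_k^\star$ yields $\x_{k+1}=f(\x_k,\u_k^\star)$ and $\tau_{k+1}=\tau_k+t_\mathrm{s}+v_k^\star$. At time $k+1$ I take the candidate obtained by discarding the first stage, keeping $v_{k+1}^\star,\dots,v_{k+N-1}^\star$, appending $v_{k+N}=0$, and closing the loop on the last stage with $\kappa^\mathrm{f}_\r$. I then verify that this candidate is feasible for~\eqref{eq:nmpc} at $(\x_{k+1},\tau_{k+1})$: the system and $\tau$ dynamics~\eqref{eq:nmpcDynamics}--\eqref{eq:tauDynamics} hold by construction; the known constraints~\eqref{eq:nmpcInequality_known} and the terminal constraint~\eqref{eq:nmpcTerminal} on the newly appended stage follow from Assumption~\ref{a:terminal-A}; and the a-priori unknown constraints~\eqref{eq:nmpcInequality_unknown} hold on the shifted stages because they held at time $k$ and Assumption~\ref{a:unknown_constraints} gives $g_{n|k+1}\le g_{n|k}\le0$, and on the appended stage by Assumption~\ref{a:terminal-B}.

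Next I would derive the descent inequality and the bounds. Telescoping the terminal inequality of Assumption~\ref{a:terminal} over the candidate gives $\tilde V(\x_{k+1},\tau_{k+1})\le V(\x_k,\tau_k)-q_\r(\x_k^\star,\u_k^\star,\tau_k^\star)-w v_k^{\star2}$, and optimality of $V$ gives $V(\x_{k+1},\tau_{k+1})\le\tilde V(\x_{k+1},\tau_{k+1})$, so $V$ strictly decreases along the closed loop whenever $e_k\neq0$. Assumption~\ref{a:cont} then yields the lower bound $V(\x_k,\tau_k)\ge q_\r(\x_k^\star,\u_k^\star,\tau_k^\star)\ge\alpha_1(e_k)$ and, for $\x_k\in\mathcal{X}^\mathrm{f}_\r(\tau_k)$, the upper bound $V(\x_k,\tau_k)\le p_\r(\x_k,\tau_k)\le\alpha_2(e_k)$ from the candidate that applies $\kappa^\mathrm{f}_\r$ already on the first stage (the step invoking \cite[Prop.~2.17]{rawlings2009model}). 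Hence $V$ is a Lyapunov function for $e_k$, so $e_k\to0$, with the feasible set of~\eqref{eq:nmpc} contained in the region of attraction.

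Finally, the non-standard part: showing $v_k\to0$. Once $\x_n^\star=\rx(\tau_n^\star)$ and $\u_n^\star=\ru(\tau_n^\star)$, Assumption~\ref{a:rec_ref} makes the reference an invariant, zero-cost trajectory under $v\equiv0$, so $q_\r$ and $p_\r$ vanish and $V(\x_k,\tau_k)=\sum_{n=k}^{k+N-1}w v_n^{\star2}$; reusing the shifted candidate (again with $v_{k+N}=0$) still gives $V(\x_{k+1},\tau_{k+1})\le V(\x_k,\tau_k)-w v_k^{\star2}$, and since $V\ge0$ this forces $v_k^\star\to0$, hence $\tau_{k+1}-\tau_k\to t_\mathrm{s}$. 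The only genuinely non-routine ingredient here is recursive feasibility of the shifted candidate with respect to the a-priori unknown constraints: this is precisely where Assumption~\ref{a:unknown_constraints} is indispensable, since without the monotonicity $g_{n|k+1}\le g_{n|k}$ a constraint inactive in the previous prediction could become violated in the new one; the $v_k\to0$ argument is secondary and relies on the penalty $w v^2$ together with Assumption~\ref{a:rec_ref} ensuring that remaining on the reference with $v=0$ is feasible and costless.
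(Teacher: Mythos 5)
Your proposal is correct and follows essentially the same route as the paper: the shifted candidate with terminal extension, feasibility via Assumptions~\ref{a:terminal} and~\ref{a:unknown_constraints}, the standard descent/lower/upper bounds making $V$ a Lyapunov function, and the non-standard step where $V=\sum_n w v_n^{\star2}$ on the reference forces $v_k\to0$. The only cosmetic difference is that you retain the $-wv_k^{\star2}$ term in the main descent inequality throughout, which the paper drops until the final part; this changes nothing substantive.
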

	\begin{Remark} 
		Assumption~\ref{a:rec_ref} can be relaxed to only require the reference to be feasible for an unspecified $\tau_0=t_0$. In this case, the system will be stabilized to the reference with a time shift which is an integer multiple of the sampling time $t_\mathrm{s}$. If, instead, feasibility holds for all initial times, then the time shift can be any real number. As opposed to MPFTC, in standard MPC the time shift is $0$ by construction.
	\end{Remark}
	\review{\begin{Remark}\label{remark:free_time}
			Note that the initial constraint $\tb[k][k]=\tau_k$ is not necessary, and Theorem~\ref{prop:stab_feas} holds also in case the initial auxiliary state $\tb[k][k]$ is free to be selected by the optimizer. 
	\end{Remark}}
	Theorem~\ref{prop:stab_feas} proves that the proposed MPFTC formulation asymptotically stabilizes towards a reference $\r(\tau)$, while relying on assumptions which are standard in the MPC literature. 
	However, Assumptions~\ref{a:rec_ref-B},~\ref{a:terminal-B}, and~\ref{a:unknown_constraints} are difficult to enforce in practice, since they require feasibility with respect to constraints which are unknown. The next section therefore investigates how this difficulty can be tackled by replacing Assumptions~\ref{a:rec_ref-B}, and \ref{a:terminal-B} with more realistic ones, and how Assumption~\ref{a:unknown_constraints} can be verified.

	\section{Safety-Enforcing MPC}\label{sec:safe_mpftc}
	The aim of this section is to tackle the issues posed by the presence of the a-priori unknown constraints~\eqref{eq:nmpcInequality_unknown}. While we cast the problem in the framework of MPFTC, we stress that the developments proposed to enforce safety are independent of the specific tracking scenario, i.e., flexible trajectory, path, setpoint, etc., and can also be deployed in the context of MPFC proposed in~\cite{Faulwasser2009,Faulwasser2016}.
	
	\review{We propose an approach that uses ideas typically found in robust MPC. In standard robust MPC settings the uncertainty typically acts on the system that is being controlled and one can therefore use feedback to contain the uncertainty in a bounded set. However, in our setting the uncertainty is external to the system and not controllable. Since the uncertainty can grow unbounded, we will complement the usual worst-case approach with a suitably-defined safe set.}
	
		In Section~\ref{sec:coll_avoidance} we will discuss how the unknown constraints $g$ can be constructed to ensure that Assumption~\ref{a:unknown_constraints} holds. 
		Then, in Section~\ref{sec:terminal_safe_set} we will introduce relaxed terminal conditions such that also Assumption~\ref{a:terminal-B} can be dropped. Finally, in Section~\ref{sec:theorem}, we conclude the main results of the paper with a theorem. 
	
	\subsection{Predictive Collision Avoidance}
	\label{sec:coll_avoidance}
	
	In this section we discuss how the a-priori unknown constraints $\gb(\xb,\ub)$ can be constructed by relying on ideas typically used in robust MPC. Since these constraints are not known a priori, they are intrinsically related to stochastic processes. In order to guarantee constraint satisfaction at all times, one needs to assume that the stochasticity support is bounded, such that the problem of guaranteeing safety can be cast in a worst-case scenario planning. This is typically done in robust MPC, where, however, the uncertainty is present in the system and not in the constraints. We will discuss this aspect further in the remainder of this section.
	
	In order to model $\gb$ as a constraint on the worst-case scenario, we introduce function $\gamma(\x,\u,\w):\mathbb{R}^{n_x}\times\mathbb{R}^{n_u}\times\mathbb{R}^{n_w}\rightarrow\mathbb{R}^{n_g}$ and stochastic variable $\wb\in\Wb_{n|k}\subseteq\mathbb{R}^{n_w}$ with bounded support $\Wb_{n|k}$, summarizing all uncertainty related to the a-priori unknown constraints. Then we define
	\begin{equation}
	\label{eq:constr_prediction}
	\gb[n][k](\xb,\ub) := \max_{\wb\in\Wb_{n|k}} \ \gamma_{n|k}(\xb,\ub,\wb).
	\end{equation}
	This formulation implies robust constraint satisfaction, i.e.,
	\begin{align*}
	\gb[n][k](\xb,\ub) \leq 0 && \Leftrightarrow && \left \{\begin{array}{l}
	\gamma_{n|k}(\xb,\ub,\wb) \leq 0, \\ \forall \ \wb\in\Wb_{n|k}.
	\end{array}\right.
	\end{align*}	
	In a general setting, $\wb$ is the state of the dynamical system
	\begin{equation}
	\wb[n+1] = \omega(\wb,\xi_{n|k},\xb,\ub), \label{eq:constr_prediction_dynamics}
	\end{equation}
	with associated control variable $\xi_{n|k}\in\Xi\subseteq\mathbb{R}^{m_\xi}$, acting as a source of (bounded) noise. The function $\omega$ describes the dynamics, and the explicit dependence on $\xb$, $\ub$ models possible interactions between the uncertainty and system \eqref{eq:sys}. 
	\begin{Example}
		\label{ex:noise}
		In the simplest case, $\wb$ can model the sensor noise. Then,~\eqref{eq:constr_prediction_dynamics} reads as $\wb[n+1] = \xi_n$ such that $\Wb_{n|k} \equiv \Xi$ and~\eqref{eq:constr_prediction} reads as 
		\begin{align*}
		\gb(\xb,\ub) &=\max_{\wb\in\Wb_{n|k}} \ \gamma_{n|k}(\xb,\ub) + \wb, 
		\end{align*}
		i.e., the constraint has additive process noise and no dynamics are involved. 
		
		The case of process noise can be formulated as
		\begin{align*}
		\gb(\xb,\ub) &=\max_{\wb\in\Wb_{n|k}} \ \gamma_{n|k}(\xb,\ub) + \wb, \\ 
		\wb[n+1] &= \omega(\wb,\xi_{n|k}),
		\end{align*}
		where there is no interaction between the uncertainty and the controlled system. This is the case in many robust MPC formulations, see, e.g.,~\cite{Mayne2014} and references therein.	
	\end{Example}
	\begin{Remark}
		The possibility of interaction between the system and the uncertainty dynamics is introduced in order to cover multi-agent settings in which the behavior of each agent can influence the behavior of other agents, e.g., a pedestrian changing his/her trajectory because of a vehicle not yielding. 
	\end{Remark}
	
	Having introduced a model of the uncertainty dynamics, it becomes natural to rely on reachability analysis in order to predict the future evolution of  the uncertainty sets, which are then defined as outer-approximations
	\begin{align}\label{eq:reachable}
	\Wb_{n+1|k}(\xb,\ub) :\supseteq \{\, &\omega(\wb,\xi_{n},\xb,\ub)\, | \\&\hspace{3.5em} \wb\in\Wb_{n|k},\ \forall\,\xi_{n}\in\Xi \, \},\nonumber
	\end{align}
	for some initial $\mathcal{W}_{k|k}=\w_{k|k}$.
	
	In order to provide further explanation about the nature of the uncertainty sets \eqref{eq:reachable}, 
	we provide the following example.
	\begin{Example}
		\label{ex:set_consistency}
		Consider a non-cooperative, non-connected, multi-agent setting in which the behavior of the other agents is uncertain. In this case, we distinguish two types of agents: (a) the ones which are detectable by the sensors, and (b) those that are either beyond sensor range or hidden by other obstacles. For type (a), we require the model to be not underestimating the set of future states that can be reached by the other agents. For type (b), the uncertainty model must predict the possibility that an agent could appear at any moment either at the boundary of the sensor range or from behind an obstacle.
	\end{Example}
	We can now state the following result.
	\begin{Lemma}\label{lem:uncertainty}
		Suppose that $\gb$ is defined according to~\eqref{eq:constr_prediction} with $\mathcal{W}_{n|k}$ satisfying~\eqref{eq:reachable}. Then, Assumption~\ref{a:unknown_constraints} holds.
	\end{Lemma}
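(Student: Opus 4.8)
\emph{Proof plan.} The driving idea is that the uncertainty set predicted one step later can never be larger than the one predicted now, so the worst‑case constraint value can only decrease. Concretely, I would show that, along any fixed predicted state–input trajectory $(\xb,\ub)$,
\[
\mathcal{W}_{n|k+1}(\xb,\ub)\ \subseteq\ \mathcal{W}_{n|k}(\xb,\ub),\qquad n\geq k+1 .
\]
Granting this, Assumption~\ref{a:unknown_constraints} is immediate from the definition \eqref{eq:constr_prediction}: since $\gamma$ does not depend on the information time at which the prediction is formed (only possibly on the prediction index $n$), one has $\gamma_{n|k+1}=\gamma_{n|k}$, and maximising one and the same function over a smaller set yields a smaller value, i.e.
\[
\gb[n][k+1](\xb,\ub)=\max_{\w\in\mathcal{W}_{n|k+1}}\gamma_{n|k}(\xb,\ub,\w)\ \leq\ \max_{\w\in\mathcal{W}_{n|k}}\gamma_{n|k}(\xb,\ub,\w)=\gb[n][k](\xb,\ub).
\]
Note that the range $n\geq k+1$ is exactly the one that matters when shifting the MPC solution from $k$ to $k+1$, and along the shifted candidate the two predictions are evaluated on a common trajectory, so the proviso above is met in the intended application (cf.\ the proof of Theorem~\ref{prop:stab_feas}).

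\emph{Induction.} I would prove the set inclusion by induction on $n$. For the base case $n=k+1$, the re‑initialisation built into \eqref{eq:reachable} gives $\mathcal{W}_{k+1|k+1}=\{\w_{k+1|k+1}\}$, a singleton. The realised uncertainty state obeys the dynamics \eqref{eq:constr_prediction_dynamics}, $\w_{k+1|k+1}=\omega(\w_{k|k},\xi_k,\x_k,\u_k)$, for the realised noise $\xi_k\in\Xi$ and for $\w_{k|k}\in\mathcal{W}_{k|k}$ (trivially, since $\mathcal{W}_{k|k}$ is the singleton $\{\w_{k|k}\}$). As \eqref{eq:reachable} forces $\mathcal{W}_{k+1|k}(\x_k,\u_k)$ to contain $\{\omega(\w,\xi,\x_k,\u_k)\mid \w\in\mathcal{W}_{k|k},\ \xi\in\Xi\}$, it contains $\w_{k+1|k+1}$, hence $\mathcal{W}_{k+1|k+1}\subseteq\mathcal{W}_{k+1|k}$. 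For the inductive step, assuming $\mathcal{W}_{n|k+1}\subseteq\mathcal{W}_{n|k}$, I would apply the recursion \eqref{eq:reachable} from $n$ to $n+1$ with the same $(\xb[n],\ub[n])$ in both predictions and use that the (over‑approximated) one‑step reachable‑set map is inclusion‑preserving, obtaining $\mathcal{W}_{n+1|k+1}\subseteq\mathcal{W}_{n+1|k}$, which closes the induction.

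\emph{Main obstacle.} The delicate point is precisely this inductive step: relation \eqref{eq:reachable} only prescribes an \emph{outer} approximation ($\supseteq$), so the inclusion need not propagate forward unless the outer‑approximation operator is itself monotone — equivalently, unless the family $\{\mathcal{W}_{n|k}\}_{n,k}$ is chosen \emph{consistently} across information times (or one simply works with exact reachable tubes, for which monotonicity is automatic). This is exactly the consistency requirement already alluded to in the remark after Assumption~\ref{a:unknown_constraints} and illustrated by Example~\ref{ex:set_consistency}, so I would either state it explicitly as a hypothesis of the lemma or invoke it at this step. The remaining ingredients — the base case and the ``maximum over a subset'' inequality — are routine.
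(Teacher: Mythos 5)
Your proposal follows essentially the same route as the paper: establish the set inclusion $\mathcal{W}_{n|k+1}\subseteq\mathcal{W}_{n|k}$ for $n\geq k+1$ and then conclude via the ``maximum of the same function $\gamma_{n|k}$ over a smaller domain'' inequality, which is exactly the paper's two-step argument. The only difference is that the paper asserts the inclusion directly from ``the properties of reachable sets,'' whereas you additionally attempt an induction and correctly flag that, because \eqref{eq:reachable} only prescribes an outer approximation, the inclusion requires the over-approximations to be chosen consistently across information times --- a legitimate subtlety the paper leaves implicit rather than a flaw in your argument.
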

	\begin{proof}
		The properties of reachable sets imply $\Wb_{n|k+1}\subseteq\Wb_{n|k}$, $n\geq k+1$. Therefore
			\begin{align*}
			&\max_{\wb[n][k+1]\in\Wb_{n|k+1}} \ \gamma_{n|k}(\xb,\ub,\wb[n][k+1]) \leq{}\\ 
			&\hspace{10em}\max_{\wb\in\Wb_{n|k}} \ \gamma_{n|k}(\xb,\ub,\wb),
			\end{align*}
			since the two optimization problems have the same cost function, and the domain of the first one is not larger than the domain of the second one. Then,
			\begin{align*}
			\gb[n][k+1](\xb,\ub) \leq \gb(\xb,\ub).
			\end{align*}
	\end{proof}
	Note that, in a robust MPC framework, this lemma amounts to assuming that the uncertainty cannot increase as additional information becomes available.
	Furthermore, a direct consequence of Assumption~\ref{a:unknown_constraints} is $\gb[n][k](\r^\x(t_n),\r^\u(t_n))\leq{}0\implies \gb[n][k+1](\r^\x(t_n),\r^\u(t_n))\leq{}0$. However, this does not entail Assumption~\ref{a:rec_ref-B}, which requires feasibility of the reference for all $n,k\in\mathbb{I}_0^\infty$. In many cases of interest, the reference does become infeasible at certain future times since the uncertainty can grow indefinitely large over time. This makes Assumption~\ref{a:rec_ref-B} false for a predefined trajectory. Some approaches tackle this issue by re-planning a feasible trajectory whenever some infeasibility is encountered~\cite{fulgenzi2008probabilistic}. However, only few such approaches can guarantee that a feasible trajectory always exists. 
	Our approach formalizes ideas similar to those of~\cite{petti2005safe,liu2017provably}, and always accounts for the worst-case scenario in order to guarantee that feasibility, hence safety, is never jeopardized even in case Assumption~\ref{a:rec_ref-B} does not hold. We ought to stress here that our approach can be deployed in combination with re-planning strategies in order to provide such guarantees.  
		
	\begin{Remark}
		In the context of autonomous driving, the constraints~$g_{n|k}$ could enforce avoiding collisions with obstacles (e.g., other road users) detected by the sensors, whose behavior can just be predicted, to some extent. Hence, Assumption~\ref{a:unknown_constraints} amounts to assuming that the uncertainty on, e.g., position and velocity of the detected objects at a specific time instance cannot increase as additional information becomes available. \review{We note however that limited sensor range makes it impossible to detect obstacles which are too far away. To ensure satisfaction of Assumption~\ref{a:unknown_constraints} one can adopt a worst-case approach which ensures that the predicted trajectory $\xb$ may never leave the sensor range, and by also assuming that new obstacles appear at the boundary of the sensor range at all times.}
	\end{Remark}
	
	\subsection{Terminal Conditions}
	\label{sec:terminal_safe_set}
	
	Assumption~\ref{a:terminal-B} poses difficulties, since the terminal controller $\kappa_\r^\mathrm{f}(\x,t)$ must satisfy $\gb(\xb,\kappa^\mathrm{f}_\r(\xb,\tb))\leq 0$, for all $n,k\in\mathbb{I}_0^\infty$, i.e., over an infinite horizon. However, this is in general impossible, unless additional assumptions are introduced, since the constraint uncertainty typically grows unbounded with time. 
		Furthermore, the terminal controller $\kappa_\r^\mathrm{f}(\x,t)$ is assumed to ensure stability with respect to the reference, i.e., $p_\r(\mathbf{x}_+^\kappa,t_+) - p_\r(\mathbf{x},t) \leq{} - q_\r(\mathbf{x},\kappa^\mathrm{f}_\r(\mathbf{x},t),t)$. This may not always be possible, since by dropping Assumption~\ref{a:rec_ref-B}, the reference might become infeasible with respect to some of the a-priori unknown constraints.

		 To cope with the recursive feasibility issue, we will assume the existence of a \emph{safe set} in which a-priori unknown constraints are guaranteed to be satisfied in all circumstances. This will allow us to rely on
		 standard approaches in MPC~\cite{borrelli2017predictive,kerrigan2001robust,yu2013tube} which are based on the existence of a robust invariant set. 
	
	\begin{Assumption}\label{a:safe}
		There exists a robust invariant set denoted $\mathcal{X}_\mathrm{safe}(\tb)\subseteq\mathbb{R}^{n_x}$ such that for all $\xb\in \mathcal{X}_\mathrm{safe}(\tb)$ there exists a safe control set $\mathcal{U}_\mathrm{safe}(\xb,\tb)\subseteq\mathbb{R}^{n_u+1}$ entailing that  $f(\xb,\u_\mathrm{safe})\in\mathcal{X}_\mathrm{safe}(\tb+t_\mathrm{s}+v_\mathrm{safe})$, and $h_n(\xb,\u_\mathrm{safe})\leq{}0$, for all $(\u_\mathrm{safe},\ v_\mathrm{safe})\in\mathcal{U}_\mathrm{safe}(\xb,\tb)$ and for all $n\geq{}k$. Moreover, for all $\xb\in \mathcal{X}_\mathrm{safe}(\tb)$ the a-priori unknown constraints can never be positive, i.e., by construction $\gb[n][k](\xb[n],\u_\mathrm{safe}) \leq 0$ for all $\xb\in \mathcal{X}_\mathrm{safe}(\tb)$ and $(\u_\mathrm{safe},\ v_\mathrm{safe})\in\mathcal{U}_\mathrm{safe}(\xb,\tb)$.
	\end{Assumption}
	\review{While this assumption might seem strong, it only postulates the existance of \emph{known} safe configurations for system~\eqref{eq:sys}. However, if no such configurations exist, then the controller based on Problem~\eqref{eq:nmpc} is intrinsically unsafe. On the other hand, if such configurations do exist for~\eqref{eq:sys}, then the safe set $\mathcal{X}_\mathrm{safe}$ is non-empty and invariant. Note that the safe configuration depends on system~\eqref{eq:sys}, the problem setting, and must be known a-priori.}
		
	\begin{Example}\label{ex:safe_set}
	\review{Many practical settings where safety is emphasized consider a system to be safe at  steady-state, in which case the safety set $\mathcal{X}_\mathrm{safe}$  can be formulated as}
	\review{\begin{align}
	\label{eq:safe_set_steady_state}
	\begin{split}
	\mathcal{X}_\mathrm{safe}(t_k):=\{ \, & \x \, | \, \x = f(\x,\u),\\
	&h_{k}(\x,\u) \leq 0, \, m_{k}(\x,\u) \leq 0 \, \},
	\end{split}
	\end{align}}
	where function $m_k$ defines additional constraints which might be needed in the set definition. Notable examples include, e.g., the following: (a) a robotic manipulator operating in a mixed human-robot environment is considered safe if it does not move; (b) a \review{vehicle parked in a safe configuration, e.g., a parking lot, emergency lane or any other safe environment that can be modeled by $m_k$}, is not responsible for collisions with other road users; (c) an electric circuit which is switched off is generally safe; (d) a ship docked in a port can be considered safe. We stress that the examples \review{together with set~\eqref{eq:safe_set_steady_state} presented above} are simplified in order to convey the safety message, but in practice the safe set will need to be carefully designed for each specific use case. 
	\end{Example}

	In general, most processes controlled either by humans or by automatic controllers do have emergency procedures which are triggered whenever safety is jeopardized. Assumption~\ref{a:safe} is meant to cover all these situations. 
	Note that this assumption entails that, when using the uncertainty model presented in Section~\ref{sec:coll_avoidance}, function $\gamma$ used in~\eqref{eq:constr_prediction} is not simply the output of the uncertainty model, but also includes the information that $\xb\in \mathcal{X}_\mathrm{safe}(\tb)$ implies $\gamma_{n|k}(\xb,\u_\mathrm{safe},\wb)=0$. This condition is typically not physics-driven, but stems from Assumption~\ref{a:safe}: we provide a clarifying example next.
		\begin{Example}
			Consider Example~\ref{ex:safe_set}, case (b), where functions $g$ and $\gamma$ model the set of positions that a pedestrian can reach. Whenever the vehicle is parked \review{in a safe configuration}, function $\gamma$ is not simply yielding all positions that a pedestrian can reach, but it must include the information that the pedestrian cannot reach the position of the vehicle, even though it would be physically possible to do so.
		\end{Example}
	
	\begin{Remark}
		Note that, while it is often reasonable to construct the set based on~\eqref{eq:safe_set_steady_state}, the safe set does not necessarily need to be forcing a steady-state, \review{other formulations, e.g., including safe periodic trajectories may also be considered.} Therefore, we prefer formulating the assumption in a generic way in order to cover as many cases of interest as possible.
	\end{Remark}
	
	The introduction of Assumption~\ref{a:safe}, allows us to drop Assumption~\ref{a:terminal-B}. We build our approach based on standard strategies in MPC~\cite{borrelli2017predictive,kerrigan2001robust,yu2013tube}, i.e., we rely on stabilizing terminal control laws $\kappa_\r^\mathrm{s}(\x,t)$ and sets $\mathcal{X}_\r^\mathrm{s}(t)$ satisfying Assumption~\ref{a:terminal-A}. 
		In order to obtain recursive feasibility also with respect to a-priori unknown constraints, we rely on the safe set $\mathcal{X}_\mathrm{safe}$ to introduce the following terminal set
	\begin{subequations}\label{eq:stabilizing_safe_set}
		\begin{align}
		\mathcal{X}^\mathrm{f}_\r(\tb[k+N]) &:=  \{\xb[k+N] \ | \ \exists \ \ub[n], \vb[n],  \label{eq:stab_a}\\
		&\tb[n+1]=\tb[n]+t_\mathrm{s} + \vb[n],\label{eq:stab_b}\\
		&\xb[n+1] =f(\xb[n],\ub[n]) ,  \\
		&\hb[n](\xb[n],\ub[n]) \leq{} 0,\\
		&\gb[n][k](\xb[n],\ub[n]) \leq{} 0,\\
		&\xb[n] \in \mathcal{X}^\mathrm{s}_\r(\tb), \label{eq:stab_f}\\
		&\xb[k+M]\in\mathcal{X}_\mathrm{safe}(\tb[k+M])\subseteq\mathcal{X}_\r^\mathrm{s}(\tb[k+M]),\label{eq:stab_g}\\
		& \eqref{eq:stab_a}-\eqref{eq:stab_f},\ \forall n\in \mathbb{I}_{k+N}^{k+M-1}\},\label{eq:stab_h}
		\end{align}
 	\end{subequations}
	where $M\geq{}N$ is a degree of freedom. Note that the construction of \eqref{eq:stabilizing_safe_set} implies that $\mathcal{X}_\mathrm{safe}(\tau)\subseteq{}\mathcal{X}_r^\mathrm{s}(\tau)$, $\forall\tau\geq{}0$. If \eqref{eq:stabilizing_safe_set} is a non-empty set we are guaranteed that for all $\x\in\mathcal{X}_\r^\mathrm{f}(\tau)$ a terminal control law exists,  which steers the states to the safe set.
	
	In order to provide a practical approach to design the terminal control law, we propose to first design a control law $\kappa_\mathrm{r}^\mathrm{s}$ as one would do in standard MPC formulations, i.e., by ignoring a-priori unknown constraints $g$ and by forcing the time in the reference to evolve according to its true dynamics. We can then define the terminal control law $(\kappa^\mathrm{f}_\r(\xb[k+N],\tb[k+N]),\nu_\r^\mathrm{f}(\xb[k+N],\tb[k+N]))$ by using $\kappa_\mathrm{r}^\mathrm{s}$, as the solution of
	\begin{subequations}\label{eq:terminal_controller_min}
	\begin{align}
		\min_{\u,\nu} \ \ & \|\u-\kappa_\mathrm{r}^\mathrm{s}(\xb[k+N],\tb[k+N])\|_2 + \nu^2 \\
		\mathrm{s.t.} \ \ &f(\xb[k+N],\u)\in\mathcal{X}_\r^\mathrm{f}(\tb[k+N]+t_\mathrm{s}+\nu), \\
		& h_{k+N}(\xb[k+N],\u) \leq0, \\
		& g_{k+N|k}(\xb[k+N],\u) \leq0.
		\end{align}
	\end{subequations}

	The idea behind the terminal set  \eqref{eq:stabilizing_safe_set} is to ensure safety by forcing the system to be able to reach a safe set $\mathcal{X}_\mathrm{safe}$ in a finite amount of time $M-N\geq{}0$, while always remaining inside a stabilizing set $\mathcal{X}_\mathrm{r}^\mathrm{s}(t)$ around the reference. An illustrative example is shown in Section \ref{sec:example_double}. Note that $M$ is a parameter which can be used to tune the stabilizing terminal safe set and, consequently, the NMPC scheme~\eqref{eq:nmpc}. If $M=N$, then the terminal set coincides with the safe set, possibly limiting the  capabilities of the terminal control law, i.e., $\kappa_\r^\mathrm{f}(\x,\tau)\neq\kappa_\r^\mathrm{s}(\x,\tau)$ and $\nu_\r^\mathrm{f}(\x,\tau)\neq{}0$. On the other hand if $M \gg N$, the computational complexity of $\mathcal{X}_\r^\mathrm{f}$ can become excessive.

	We can now prove recursive feasibility for the terminal control law.
	\begin{Lemma}\label{lemma:1}
		Suppose that Assumptions \ref{a:cont}, \ref{a:rec_ref-A}, \ref{a:terminal-A}, \ref{a:unknown_constraints}, and \ref{a:safe} hold, and that the terminal set $\mathcal{X}_\mathrm{r}^\mathrm{f}$ given by \eqref{eq:stabilizing_safe_set} is nonempty. Then, for an initial state $\xb[k+N]\in\mathcal{X}_\r^\mathrm{f}(\tb[k+N])$, the terminal controller \eqref{eq:terminal_controller_min} is recursively feasible.
		
		\begin{proof}
			Consider the state $(\xb[k+N],\tb[k+N])$ at time $k$ such that $\xb[k+N]\in\mathcal{X}_\mathrm{r}^\mathrm{f}(\tb[k+N])$. By construction, there exist control sequences $\mathbf{U}_k=\{\ub[k+N],...,\ub[k+M-1]\}$ and $\mathbf{V}_k=\{\vb[k+M],...,\vb[k+M-1]\}$ that generate the corresponding state trajectories $\mathbf{X}_k=\{\xb[k+N],...,\xb[k+M]\}$ and $\mathbf{T}_k=\{\tb[k+N],...,\tb[k+M]\}$ satisfying \eqref{eq:stabilizing_safe_set}. Hence, we have that $h_{k+N}(\xb[k+N],\ub[k+N])\leq{}0$ and $\gb[k+N](\xb[k+N],\ub[k+N])\leq{}0$. By applying the first control inputs $\ub[k+N]$ and $\vb[k+N]$, we know that 
			the following trajectories
			\begin{align}
			\mathbf{X}_{k+1} &= \{\xb[k+N+1],...,\xb[k+M],f(\xb[k+M],\u_\mathrm{safe}) \},\\
			\mathbf{T}_{k+1} &= \{\tb[k+N+1],...,\tb[k+M], \tb[k+M] + t_\mathrm{s} + v_\mathrm{safe} \},\\
			\mathbf{U}_{k+1} &= \{\ub[k+N+1],...,\ub[k+M-1],\u_\mathrm{safe} \},\\
			\mathbf{V}_{k+1} &= \{\vb[k+N+1],...,\vb[k+M-1],\vv_\mathrm{safe} \},
			\end{align}
			satisfy $h_n(\xb,\ub)\leq{}0$ for $n\in\mathbb{I}_{k+N+1}^{k+M-1}$ at time $k+1$ by definition, while Assumption~\ref{a:unknown_constraints} ensures that
				$$\gb[n][k+1](\xb,\ub)\leq{}\gb[n][k](\xb,\ub)\leq{}0,\ \forall n\in\mathbb{I}_{k+N}^{k+M-1}.$$
			Furthermore, since $\xb[k+M]\in\mathcal{X}_\mathrm{safe}(\tb[k+M])$, Assumption~\ref{a:safe} ensures $\hb[k+M](\xb,\u_\mathrm{safe})\leq{}0$, $\gb[k+M][k+1](\xb[k+M],\u_\mathrm{safe})\leq{}0$, and that $f(\xb[k+M],\u_\mathrm{safe})\in\mathcal{X}_\mathrm{safe}$. Hence, we have shown that there exists control inputs $\u:=\ub[k+N]$ and $v:=\vb[k+N]$ ensuring $f(\xb[k+N],\u)\in\mathcal{X}_\r^\mathrm{f}(\tb[k+N]+t_\mathrm{s}+v)$ and that Problem~\eqref{eq:terminal_controller_min} is feasible at time $k$. To prove feasibility for times $\kappa^\prime\geq{}k+1$, we can use the invariance condition of Assumption~\ref{a:safe} and construct similar guesses.
		\end{proof}	
	\end{Lemma}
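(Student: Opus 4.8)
The plan is to prove recursive feasibility by a certificate-shifting argument, modified to cope with the fact that the terminal set $\mathcal{X}_\r^\mathrm{f}$ itself depends on the information time $k$ through the a-priori unknown constraints $\gb[\cdot][k]$. First I would fix a time $k$ and a state $\xb[k+N]\in\mathcal{X}_\r^\mathrm{f}(\tb[k+N])$; since this set is nonempty, its defining conditions~\eqref{eq:stabilizing_safe_set} guarantee controls $\{(\ub[n],\vb[n])\}_{n=k+N}^{k+M-1}$ and an associated state/parameter trajectory $\{(\xb[n],\tb[n])\}_{n=k+N}^{k+M}$ obeying the dynamics and constraints of~\eqref{eq:stabilizing_safe_set}, i.e., $\hb[n](\xb[n],\ub[n])\leq0$, $\gb[n][k](\xb[n],\ub[n])\leq0$, $\xb[n]\in\mathcal{X}_\r^\mathrm{s}(\tb[n])$, and $\xb[k+M]\in\mathcal{X}_\mathrm{safe}(\tb[k+M])\subseteq\mathcal{X}_\r^\mathrm{s}(\tb[k+M])$.

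Second, I would show that $(\u,\nu):=(\ub[k+N],\vb[k+N])$ is feasible for Problem~\eqref{eq:terminal_controller_min} at time $k$. The inequalities $\hb[k+N](\xb[k+N],\u)\leq0$ and $\gb[k+N][k](\xb[k+N],\u)\leq0$ hold by construction, so only the terminal inclusion $f(\xb[k+N],\u)\in\mathcal{X}_\r^\mathrm{f}(\tb[k+N]+t_\mathrm{s}+\nu)$ requires an argument. For this I would exhibit an explicit certificate for $\xb[k+N+1]:=f(\xb[k+N],\u)\in\mathcal{X}_\r^\mathrm{f}(\tb[k+N+1])$: take the one-step shift $\{(\ub[n],\vb[n])\}_{n=k+N+1}^{k+M-1}$ of the certificate above and append one additional step produced by a safe control $(\u_\mathrm{safe},v_\mathrm{safe})\in\mathcal{U}_\mathrm{safe}(\xb[k+M],\tb[k+M])$, which exists by Assumption~\ref{a:safe}. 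The same assumption yields $\hb[k+M](\xb[k+M],\u_\mathrm{safe})\leq0$, $\gb[k+M][k](\xb[k+M],\u_\mathrm{safe})\leq0$, and $f(\xb[k+M],\u_\mathrm{safe})\in\mathcal{X}_\mathrm{safe}(\tb[k+M+1])\subseteq\mathcal{X}_\r^\mathrm{s}(\tb[k+M+1])$, while the intermediate states inherit membership in $\mathcal{X}_\r^\mathrm{s}$ from the original certificate. Thus~\eqref{eq:terminal_controller_min} is feasible at time $k$.

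Third, I would close the recursion. Applying any feasible $(\u^\star,\nu^\star)$ of~\eqref{eq:terminal_controller_min} at time $k$ yields the successor $\x^+=f(\xb[k+N],\u^\star)$, which by the terminal constraint lies in $\mathcal{X}_\r^\mathrm{f}(\tb[k+N]+t_\mathrm{s}+\nu^\star)$ as built with the time-$k$ constraints $\gb[\cdot][k]$. To obtain feasibility at time $k+1$ I must upgrade this to membership in the terminal set built from the refined constraints $\gb[\cdot][k+1]$, and this is exactly where Assumption~\ref{a:unknown_constraints} enters: because $\gb[n][k+1](\xb[n],\ub[n])\leq\gb[n][k](\xb[n],\ub[n])\leq0$ for the relevant $n\geq k+1$, and every other ingredient of~\eqref{eq:stabilizing_safe_set} is independent of the information time, the certificate that witnessed $\x^+\in\mathcal{X}_\r^\mathrm{f}$ at time $k$ is still admissible at time $k+1$. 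Hence $\x^+\in\mathcal{X}_\r^\mathrm{f}(\cdot)$ in the time-$(k+1)$ sense, and repeating the second step (re-extending the certificate with a fresh safe-set step) shows~\eqref{eq:terminal_controller_min} is feasible at time $k+1$; an induction then gives feasibility at all times $k'\geq k$.

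I expect the only non-routine point to be exactly this dependence of $\mathcal{X}_\r^\mathrm{f}$ on the information time: a plain shift argument as in standard MPC does not suffice, since at time $k+1$ the terminal set is reconstructed with $\gb[\cdot][k+1]$ rather than $\gb[\cdot][k]$. The monotonicity Assumption~\ref{a:unknown_constraints} keeps the earlier certificate admissible for the new, possibly tighter, terminal set, and Assumption~\ref{a:safe} is what lets the certificate be re-extended by one admissible step at each iteration so that it always retains the required length $M-N$. The remaining effort — checking each constraint along the shifted-and-extended trajectory — is routine bookkeeping.
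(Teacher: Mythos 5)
Your proposal is correct and follows essentially the same argument as the paper: shift the certificate trajectory underlying $\xb[k+N]\in\mathcal{X}_\r^\mathrm{f}(\tb[k+N])$, append one step generated by $(\u_\mathrm{safe},v_\mathrm{safe})$ from Assumption~\ref{a:safe}, and invoke Assumption~\ref{a:unknown_constraints} to keep the shifted certificate admissible when the unknown constraints are updated from $\gb[n][k]$ to $\gb[n][k+1]$. Your explicit separation of ``feasibility at time $k$'' from ``upgrading the certificate to information time $k+1$'' is a slightly cleaner organization of the same bookkeeping, but it is not a different proof.
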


	\subsection{MPC Recursive Feasibility in Uncertain Environments}\label{sec:theorem}

	Building on the results presented in Sections~\ref{sec:coll_avoidance} and ~\ref{sec:terminal_safe_set}, we can prove recursive feasibility for the proposed MPC scheme.
	
	\begin{Theorem}[Recursive Feasibility]\label{theorem:1} Suppose that Assumptions \ref{a:cont}, \ref{a:rec_ref-A}, \ref{a:terminal-A}, \ref{a:unknown_constraints}, and \ref{a:safe} hold, and that Problem~\eqref{eq:nmpc} is feasible for the initial state $(\x_k,\tau_k)$, with terminal set and terminal controllers given by \eqref{eq:stabilizing_safe_set} and \eqref{eq:terminal_controller_min}, respectively. Then, system \eqref{eq:sys}-\eqref{eq:tau_controlled} in closed loop with the solution of~\eqref{eq:nmpc} applied in receding horizon is safe (recursively feasible) at all times.
		\begin{proof}
			Consider the following control trajectories $\mathbf{U}^\star_k=\{\ub[k]^\star,...,\ub[k+N-1]^\star\}$ and $\mathbf{V}^\star_k=\{\vb[k]^\star,...,\vb[k+N-1]^\star\}$, with corresponding state trajectories $\mathbf{X}_k^\star = \{\xb[k]^\star,...,\xb[k+N]^\star\}$ and $\mathbf{T}_k^\star = \{\tb[k]^\star,...,\tb[k+N]^\star\}$, to be the solution of Problem~\eqref{eq:nmpc}. Recursive feasibility follows from Assumption~\ref{a:unknown_constraints} and Lemma~\ref{lemma:1}, which ensure that the control and state trajectories $\mathbf{U}^\star_k$, $\mathbf{V}^\star_k$, $\mathbf{X}_k^\star$, and $\mathbf{T}_k^\star$, and their prolongation to infinite time using $\kappa_\r^\mathrm{f}(\xb[k^\prime],\tb[k^\prime])$ and $\vv_\r^\mathrm{f}(\xb[k^\prime],\tb[k^\prime])$, satisfy constraints $g_{k^{\prime}|k^{\prime\prime}}$ for all $k^{\prime}\geq{}k^{\prime\prime}$ and $k^{\prime\prime}\geq{}k$. This, in turn, implies that the controller is safe in the sense of Definition \ref{def:safe}.
		\end{proof}
	\end{Theorem}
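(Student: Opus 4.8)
The plan is to establish recursive feasibility by the standard MPC "shift-and-append" construction, adapted to handle the two complications specific to this setting: the auxiliary time state $\tb$, and the fact that the a-priori unknown constraints $\gb[n][k]$ change with $k$. First I would suppose Problem~\eqref{eq:nmpc} is feasible at time $k$ with optimizer $\mathbf{U}^\star_k$, $\mathbf{V}^\star_k$, $\mathbf{X}^\star_k$, $\mathbf{T}^\star_k$, apply the first inputs $\ub[k]^\star$, $\vb[k]^\star$ to obtain $\x_{k+1}=f(\x_k,\ub[k]^\star)$ and $\tau_{k+1}=\tau_k+t_\mathrm{s}+\vb[k]^\star$, and exhibit a feasible (not necessarily optimal) candidate for Problem~\eqref{eq:nmpc} at time $k+1$. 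The candidate is the usual one: drop the first stage, keep $\ub[k+1]^\star,\dots,\ub[k+N-1]^\star$ and $\vb[k+1]^\star,\dots,\vb[k+N-1]^\star$, and append the terminal control law $(\kappa_\r^\mathrm{f}(\xb[k+N]^\star,\tb[k+N]^\star),\nu_\r^\mathrm{f}(\xb[k+N]^\star,\tb[k+N]^\star))$ at the end to produce the new terminal state.

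Next I would verify each constraint class for this candidate. The dynamics \eqref{eq:nmpcDynamics}–\eqref{eq:tauDynamics} and the initial condition \eqref{eq:nmpcState} hold by construction. The known constraints \eqref{eq:nmpcInequality_known} on the shifted portion held at time $k$ and are unchanged, since $\hb[n|k+1]=\hb[n|k]=h_n$. For the unknown constraints \eqref{eq:nmpcInequality_unknown}, this is where Assumption~\ref{a:unknown_constraints} (equivalently Lemma~\ref{lem:uncertainty}) does the work: for $n\in\mathbb{I}_{k+1}^{k+N-1}$ we had $\gb[n][k](\xb[n]^\star,\ub[n]^\star)\leq 0$, and Assumption~\ref{a:unknown_constraints} gives $\gb[n][k+1](\xb[n]^\star,\ub[n]^\star)\leq\gb[n][k](\xb[n]^\star,\ub[n]^\star)\leq 0$. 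The only genuinely new constraint to check is the terminal one, $\xb[k+N+1]^\kappa\in\mathcal{X}_\r^\mathrm{f}(\tb[k+N+1]^\kappa)$: this is exactly the content of Lemma~\ref{lemma:1}, which shows that from any $\xb[k+N]^\star\in\mathcal{X}_\r^\mathrm{f}(\tb[k+N]^\star)$ the terminal controller \eqref{eq:terminal_controller_min} is feasible and produces a successor state again in $\mathcal{X}_\r^\mathrm{f}$ evaluated at the updated fictitious time $\tb[k+N]^\star+t_\mathrm{s}+\nu$; the embedded safe-set membership \eqref{eq:stab_g} and Assumption~\ref{a:safe} are what guarantee this with respect to the unknown constraints. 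Having exhibited a feasible point, Problem~\eqref{eq:nmpc} is feasible at time $k+1$, and induction gives feasibility for all $k'\geq k$.

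Finally, feasibility at every time step must be translated into safety in the sense of Definition~\ref{def:safe}, i.e., $h_{k'}(\x_{k'},\u_{k'})\leq 0$ and $g_{k'}(\x_{k'},\u_{k'})\leq 0$ along the actual closed loop. The known-constraint part is immediate since the applied input $\ub[k']^\star$ satisfies \eqref{eq:nmpcInequality_known} at its own time. The unknown-constraint part is the subtle point and the main obstacle: $g_{k'}=g_{k'|\infty}$ is the limit of the predictions, so I must chain Assumption~\ref{a:unknown_constraints} forward in time — the applied input satisfies $g_{k'|k'}(\x_{k'},\ub[k']^\star)\leq 0$, and the monotonicity $g_{n|k+1}\leq g_{n|k}$ must be argued to persist as information accumulates, noting that the optimizer at time $k'$ together with its infinite-horizon prolongation via $\kappa_\r^\mathrm{f},\nu_\r^\mathrm{f}$ (which exists and stays feasible by Lemma~\ref{lemma:1} and Assumption~\ref{a:safe}) keeps every constraint $g_{k'|k''}$ nonnegative-free for all $k''\geq k$; taking the appropriate limit yields $g_{k'}(\x_{k'},\u_{k'})\leq 0$. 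I expect the bookkeeping of the two indices $k'$ (prediction time) and $k''$ (information time) in this last chaining argument, and making precise in what sense the "real" constraint $g_n$ is reached in the limit, to be the part requiring the most care; everything else is the routine telescoping/shift argument.
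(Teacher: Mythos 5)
Your proposal is correct and follows essentially the same route as the paper's proof: the standard shift-and-append construction, with Assumption~\ref{a:unknown_constraints} handling the tightening-to-loosening update of the a-priori unknown constraints on the shifted portion and Lemma~\ref{lemma:1} supplying feasibility of the appended terminal control law and the new terminal-set membership. You in fact make explicit the final chaining $g_{k'}(\x_{k'},\u_{k'})=g_{k'|\infty}(\x_{k'},\u_{k'})\leq g_{k'|k'}(\x_{k'},\u_{k'})\leq 0$, which the paper leaves implicit.
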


	While we have proven recursive feasibility, the presence of obstacles makes it harder to discuss closed-loop stability. In principle it seems possible to prove Input-To-State Stability~(ISS), and we will investigate this possibility in future research. Here, we limit ourselves to the observation that, if the a-priori unknown constraints become inactive, then the proposed formulation yields nominal asymptotic stability, which can be proven by standard arguments such as those used in Theorem~\ref{prop:stab_feas}.

	\section{Simulations}\label{sec:simulations}
	In this section we propose three examples to illustrate the developed theory. The first example aims at discussing how the novel MPFTC framework works, and how it compares to the similar and competing MPFC framework using the example in~\cite{Faulwasser2009}. The second example focuses the recursive feasibility aspect. Here a very simple double integrator is used to in order to illustrate all important aspects, and to be able to visualize all features. Finally, the third example, which was also used in~\cite{Faulwasser2016}, shows how our safety enforcing formulation performs for a more involved setting.
	
	In all the examples we will use stage and terminal cost in \eqref{eq:stage_cost}-\eqref{eq:terminal_cost}, i.e.,
	\begin{gather*}
	\Delta\xb := \xb-\rx(\tb),\quad \Delta\ub:=\ub-\ru(\tb), \nonumber\\
	q_\r(\xb,\ub,\tb) = \matr{c}{\Delta\xb\\\Delta\ub}^\top{}W\matr{c}{\Delta\xb\\\Delta\ub},\\
	p_\r(\xb[k+N],\tb[k+N]) = \Delta\xb[k+N]^\top{}P\Delta\xb[k+N].
	\end{gather*}
	
	Furthermore, all simulations ran on a laptop computer (i7 2.8GHz, 16GB RAM) and were implemented in Matlab using the CasADi~\cite{andersson2019casadi} software together with the IPOPT~\cite{wachter2006implementation} solver. \review{The simulation scripts have been made available on Github\footnote{{\tt \review{github.com/ivobatkovic/safe-trajectory-tracking}}}.}
	
	\subsection{Practical Safe MPFTC Formulation}
	Since computing terminal set~\eqref{eq:stabilizing_safe_set} explicitly is impractical for non-trivial cases, 
	we extend the MPC prediction horizon from $N$ to $M$ and include the conditions defining~\eqref{eq:stabilizing_safe_set} as constraints in the MPC problem. More formally, for practical applications, we propose to use the following problem formulation, which is equivalent to Problem~\eqref{eq:nmpc}
	\begin{subequations}\label{eq:practical}
		\begin{align}
		\min_{\substack{\bx\\\btau},\substack{\bu\\\bv}}& \sum_{n=k}^{k+N-1}
		q_\r(\xb,\ub,\tb)+w \vb^2&\\ 
		& &\hspace{-10em}+p_\r(\xb[k+N],\tb[k+N])\nonumber\\
		\text{s.t.}\ &\xb[k][k] = \x_{k},\  \tb[k] = \tau_{k},& \\
		&\xb[n+1] = f(\xb,\ub), &\hspace{-1em}n\in \mathbb{I}_k^{k+M-1},\\
		&\tb[n+1] = \tb+t_\mathrm{s}+\vb, &\hspace{-1em}n\in \mathbb{I}_k^{k+M-1},\\
		&\hb(\xb,\ub) \leq{} 0, & \hspace{-1em}n\in \mathbb{I}_k^{k+M-1},\\
		&\gb[n][k](\xb,\ub) \leq{} 0, & \hspace{-1em}n\in \mathbb{I}_k^{k+M-1},\\
		&\xb[k+n] \in\mathcal{X}_\r^\mathrm{s}(\tb[k+n]),&\hspace{-1em}n\in \mathbb{I}_{k+N}^{k+M-1},\\
		&\xb[k+M]\in\mathcal{X}_\mathrm{safe}(\tb[k+M])\subseteq\mathcal{X}_\r^\mathrm{s}(\tb[k+M]).\hspace{-10em}&
		\end{align}
	\end{subequations}
	This scheme can be seen as a 2-stage problem, where the first stage ($n\in\mathbb{I}_{k}^{k+N}$) defines the MPC problem and the second stage  ($n\in\mathbb{I}_{k+N+1}^{k+M}$) defines the terminal set implicitly. While this formulation solves the issue of precomputing $\mathcal{{X}}_\r^\mathrm{f}$ explicitly, it can still suffer from numerical difficulties since the trajectory in the second stage ($n\in\mathbb{I}_{k+N+1}^{k+M}$) is not penalized by any cost. While a thorough discussion on possible remedies is out of the scope of this paper, we observe the following: (a) for interior-point methods, the primal interpretation results in a cost penalizing deviations from the center of the feasible domain, such that the solution is unique and well-defined; (b) for any iterative solver, introducing a penalization on deviations from the previous iterate has a regularizing effect which alleviates the numerical difficulties; (c) tracking the reference also beyond $n=N$ with a small penalty function introduces a perturbation on the optimal solution which can be analyzed in the context of ISS and is expected to result in a small tracking inaccuracy which could be acceptable in many practical situations.
	\begin{figure*}[ht]
		\centering
		\includegraphics[width=\linewidth]{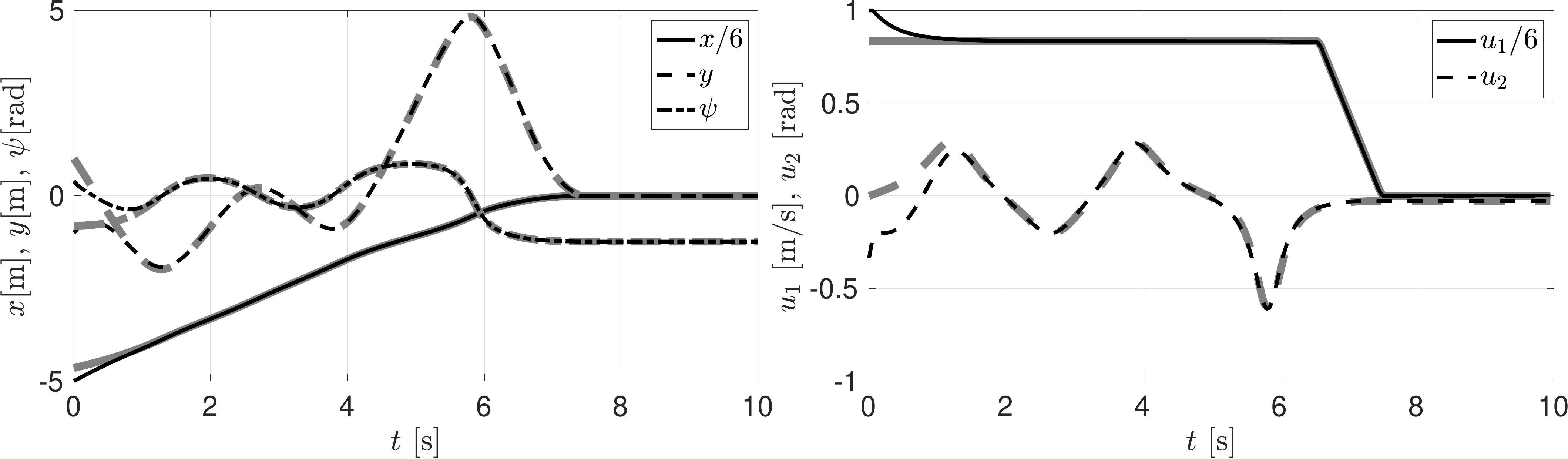}  
		\caption{Closed loop trajectories for the autonomous vehicle example with MPFTC with $w=10$. Gray lines with matching line style denote the reference trajectory.}
		\label{fig:vehicle_mpftc_states}
	\end{figure*}

	\begin{figure}[ht]
		\mbox{\parbox{.48\textwidth}{
				\centering
				\includegraphics[width=\linewidth]{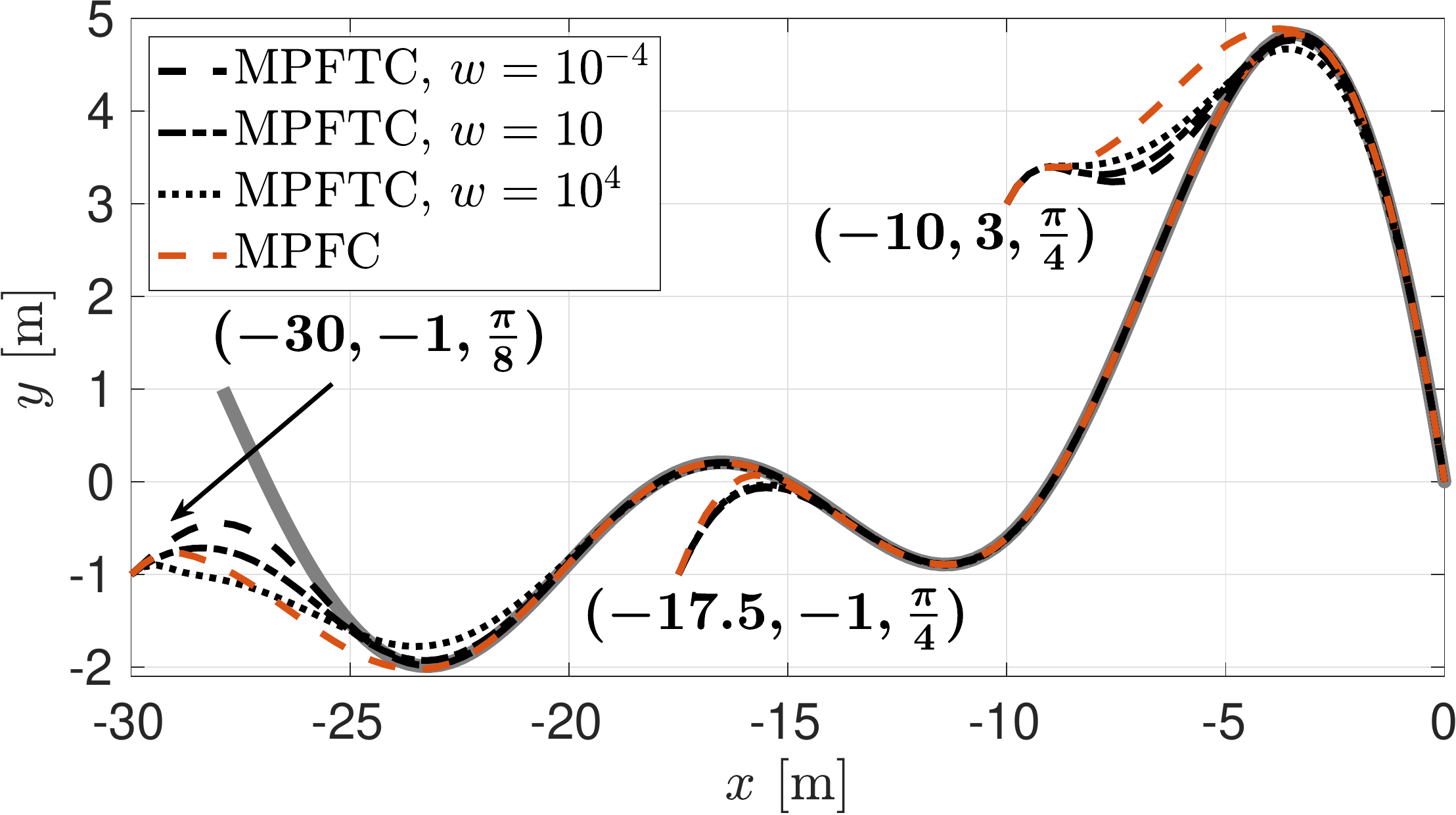}  
		}}
		\caption{Closed loop trajectories of the $x$-$y$ states for the autonomous vehicle example.}
		\label{fig:vehicle_mpftc_closed}
	\end{figure}
	
	\subsection{Autonomous Vehicle Reference Tracking}
	We evaluate MPFTC and we compare the results with the MPFC formulation of~\cite{Faulwasser2009} using the kinematic model of a car
	\begin{equation}\label{eq:robot_system}
	\matr{c}{\dot{x}\\\dot{y}\\\dot\psi} = \matr{c}{u_1\cos\psi\\u_1\sin\psi\\u_1\tan{}u_2},
	\end{equation}
	where $x$ and $y$ describe the position, $\psi$ is the yaw angle, $u_1$ is the speed, and $u_2$ the steering angle. The system is subject to the constraints $0\leq{}u_1\leq{}6,\,\|u_2\|_2\leq{}0.63$.
	As reference, we use the path 
	\begin{equation}
	p(\theta) = \left [\rho(\theta)^\top,\arctan\left (\frac{\partial\rho_2}{\partial\theta}\right )\right ]^\top,
	\end{equation}
	\review{with $\rho(\theta)=\left[ \theta,-6\log(20/(5+|	\theta|))\sin(0.35\theta)\right ]^\top$} and parameter $\theta\in[-30,0]$. 
	Since MPTFC requires a trajectory to track, we design the a-priori path evolution of $\theta$ to be
	\begin{align*}
	\dot{\theta}(t) &= \frac{v_\mathrm{ref}(t) }{\left \| \nabla_\theta \rho(\theta(t))\right \|_2}, && v_\mathrm{ref}(t) = \left \{ 
	\begin{array}{@{}ll@{}}
	\phantom{-}5 & t\leq{}7\\
	\max(5-a t,0) & t>7
	\end{array}
	\right . ,
	\end{align*}
	with $a=-5.38$. This predefined path evolution implies that the  reference velocity along the trajectory will be $5 \ \mathrm{m/s}$ for $t\leq{}7$, and monotonically decreasing for $t>7$ until reaching zero velocity.
	For MPFTC~\eqref{eq:nmpc}, with auxiliary state $\tau$ and control $v$ we formulate a feasible reference as
	\begin{gather}
	\r^\x(t) = \matr{c}{\rho(\theta(t))^\top,\, \arctan\left (\frac{\partial\rho_2}{\partial\theta}\right ) }^\top,\\
	\r^\u(t) = \matr{c}{\dot{\theta}(t)\sqrt{1+\left (\frac{\partial\rho_1(\theta(t))}{\partial\theta}\right )^2}\\
		\arctan\left (\frac{\partial^2\rho_2}{\partial\theta^2}\left (1+\left (\frac{\partial\rho_2}{\partial\theta}\right )^2\right )^{-\frac{3}{2}}\right )
	},
	\end{gather}
	where the control reference can directly be derived from \eqref{eq:robot_system}.
	For the cost we use $W = \mathrm{blockdiag}(Q,R)$ with
	\begin{gather*}\label{eq:tuning}
	Q=\mathrm{diag}(1,1,1),\quad
	R=\mathrm{diag}(1,1),\quad w=10.
	\end{gather*}
	Using the results from (\cite{Faulwasser2009}, Collorary 1) it can be shown that the terminal cost given by $P=Q$, together with the terminal set
	$\mathcal{X}^\mathrm{f}_\r(t) =\{ \x\, |\, \x=\r^x(t)\}$, is a suitable, although conservative, choice to stabilize the system. 
	
	For MPFC, we use the tuning parameters $Q_\mathrm{MPFC}=8\cdot{}\mathrm{diag}(10^4,10^5,10^5,1/16)$, $R_\mathrm{MPFC}=\mathrm{diag}(10,10,1)$, terminal weight $Q_{\mathrm{MPFC},N}=1/2\cdot{}\mathrm{diag}(0,0,0,1740)$, and same setup as in \cite{Faulwasser2009}.
	
	For the simulation, we use a prediction horizon of $1$s, and solve the OCP repeatedly with 20 direct multiple shooting intervals. The closed loop sampling time is \review{$\delta=0.05$s}. The initial value for $\tau_0$ is selected by projecting the initial position $(x_0,y_0)$ on the trajectory, i.e., $$\tau_0=\arg\min\|(x_0,y_0)-\rho(\theta(\tau))\|_2.$$

	Figure \ref{fig:vehicle_mpftc_states} shows the closed-loop trajectories for the initial value $\x_0=[-30,-1,\pi/8]^\top$, $\tau_0=0.573$s for the MPFTC controller with $w=10$. Starting with an offset from the reference trajectory, for all initial states the system is stabilized to the reference and converges to the point $p(0)$. Figure \ref{fig:vehicle_mpftc_closed} shows closed-loop trajectories for different initial values for both MPFTC and MPFC. For MPFTC, we show how the closed-loop behavior depends on the auxiliary input cost $w$: higher penalties result in a more aggressive tracking than lower ones. Note that while MPFTC and MPFC result in different closed-loop trajectories, their behavior is similar. For the simulations with initial condition $\x_0=[-30,-1,\pi/8]^\top$, we obtained the average runtime of $\hat{t}_\mathrm{MPFTC}=0.049$s and $\hat{t}_\mathrm{MPFC}=0.044$s for MPFTC and MPFC, respectively. We stress our implementation was far from being implemented in a computationally optimal way, such that a real-time implementation is expected to provide much faster run times.
	
	\subsection{Double Integrator with A-Priori Unknown Obstacles}\label{sec:example_double}
	We consider double integrator dynamics to illustrate in the simplest fashion the effect of the safe set formulation in the presence of a-priori unknown obstacles. The state and control are $\x=[p,\dot{p}]^\top$, $\dot{p}\geq{}0$ and $\u=a\in[-1,5]$ respectively, with
	reference $\r^\x(t)=[tv^{\mathrm{r}},v^{\mathrm{r}}]^\top$, $\r^\u(t)=0$. For the cost we use $W = \mathrm{blockdiag}(Q,R)$ with
	\begin{align*}
	Q=\mathrm{diag}(10,10), && R=1, && w=1.
	\end{align*}
	The terminal cost matrix $P$ is obtained from the LQR cost corresponding to $Q_\mathrm{LQR}=\mathrm{diag}(1,1)$, $R_\mathrm{LQR}=10$, and 
	a corresponding stabilizing set for the LQR controller $\mathcal{X}^\mathrm{s}_\r(t)=\{ \x\,|\, -K(\x-\r^\x(t))\in[-1,5]\}$. We construct the safe set using \eqref{eq:safe_set_steady_state} to obtain $\mathcal{X}_\mathrm{safe}(t)=\{ \x\,|\,\dot{p}=0\}\cap\mathcal{X}_\r^\mathrm{s}(t)$. With $\mathcal{X}^\mathrm{s}_\r(\tb[k+n])$ and $\mathcal{{X}}_\mathrm{safe}$, we use the terminal set $\mathcal{X}^\mathrm{f}_\r(\tb[k+N])$  given by \eqref{eq:stabilizing_safe_set}. 
	
	\begin{figure*}[ht!]
		\centering
		\includegraphics[width=\linewidth]{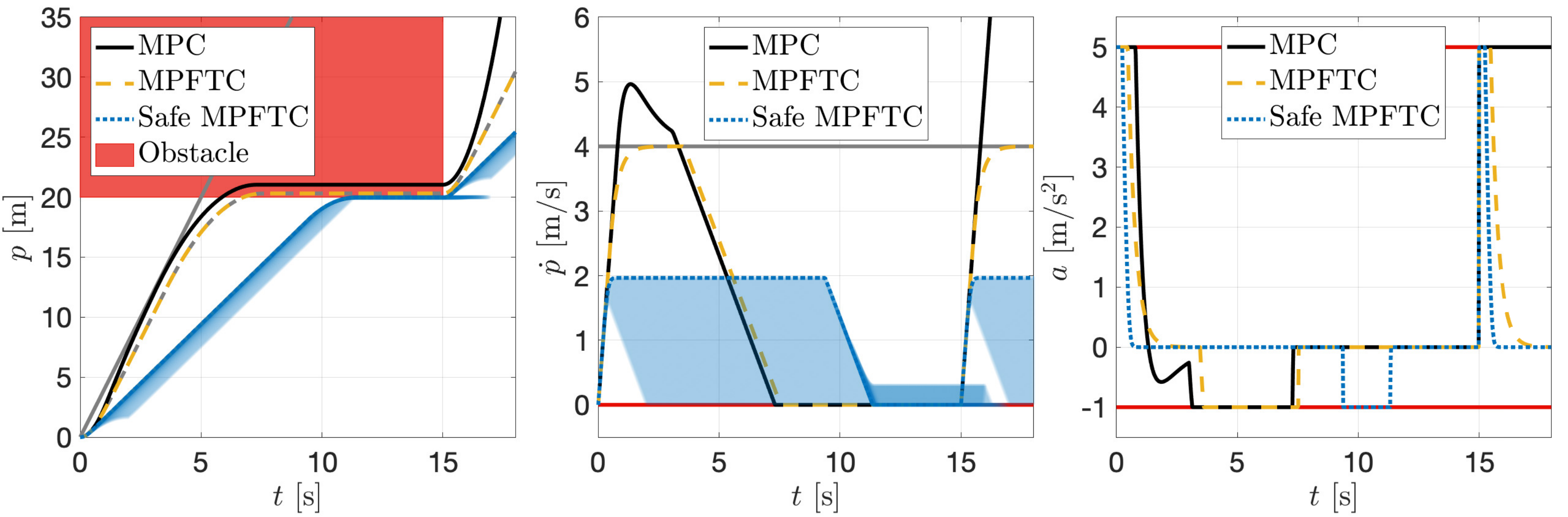}  
		\caption{Closed-loop trajectories for the double integrator example. The red box and lines denote constraints, while the gray lines denote the reference trajectory, with the line style matching the corresponding state trajectory. The opaque blue lines in the two left-most figures show the safe MPFTC open-loop trajectories at all times for the position and velocity respectively.}
		\label{fig:safe_mpftc_crash}
	\end{figure*}
	
	We introduce a static obstacle at position $p^{\mathrm{obs}}$, such that
	\begin{equation}\label{eq:double_integrator_obstacle}
	\gb(\xb,\ub) = p_{n|k} - p^{\mathrm{obs}}_{n|k} \leq{} 0.
	\end{equation}
	We position the obstacle at $p^{\mathrm{obs}}=20$m at times $t\in[0,15]$. For the following problems not including  the safety formulation, we relax the constraint with an exact penalty \cite{Scokaert1999a}.

	For the following scenario we compare MPFTC with standard MPC, i.e., without reference adaptation.
	To highlight the benefit of a safety set, we select a reference $v^\mathrm{r}=4$m/s with a sampling time of $t_s=0.02$s and control intervals $N=50$, and $M=100$. For MPFTC and MPC without the stabilizing safe set, we use the same sampling time but set $N=100$. 
	
	Figure \ref{fig:safe_mpftc_crash} shows that MPC and MPFTC without the proposed safety formulation are not able to satisfy the constraint. Additionally, MPC is very aggressive, especially after the obstacle is removed: while the system was stopped by the obstacle the gap in the position reference kept increasing, resulting in a wind-up effect. On the contrary, safe MPFTC satisfies the constraint and does not have an aggressive behavior. Moreover, it never attains the reference velocity of 4m/s since it would not be safe to do so. With a longer prediction horizon a velocity of 4m/s would be safe and in that case also safe MPFTC would reach the reference. Note that between times $t\in[12,15]s$, the state $\x_k=[p_k,\dot{p}_k]^\top$ is in fact in the safe set, since the velocity is  zero, i.e., $\dot{p}_k=0$. Then, after $15$s, the constraint is lifted and the controller continues to track the reference same as before. Finally, the blue opaque open loop state trajectories show how the safe set forces the velocity to be zero at the end of the horizon. For each simulation, we obtained the following average runtimes: $\hat{t}_\mathrm{MPC}=0.074$s, $\hat{t}_\mathrm{MPFTC}=0.046$s, and $\hat{t}^\mathrm{Safe}_\mathrm{MPFTC}=0.055$s, for MPC, MPFTC and safe MPFTC, respectively. Also in this case we stress that these run times only provide an indication and a real-time implementation is expected to be much faster.
	
	Since in general it is impractical or even impossible to compute the terminal set~\eqref{eq:stabilizing_safe_set} explicitly, we formulated safe MPFTC implicitly using the formulation proposed in Problem~\eqref{eq:practical}. However, in this simple example, such a set can be computed explicitly \review{using~\cite{MPT3}}. Figure \ref{fig:double_integrator_safe_terminal_set} shows the terminal set around the reference $\r^\x(t)$ for different values  of $\tb[k+N]$. When the reference is far from the obstacle,  the set is defined by the actuator limitations and the requirement to reach $\mathcal{X}_\mathrm{safe}$ in a finite number of time steps. However, as $\tb[k+N]$ approaches $5$ s, the reference approaches the obstacle $p^\mathrm{obs}$ and the set contracts in order to satisfy constraint \eqref{eq:double_integrator_obstacle}.

	It is well known that in optimal control, constraints appearing in the far future have a negligible impact on the initial control. This fact is tacitly exploited together with the exact penalty constraint relaxation in order to avoid feasibility issues. However, our formulation provides a rigorous approach and is particularly useful in cases in which the obstacles cannot be detected well in advance.
	
	Finally, in this example the constraints are not time-varying for the sake of simplicity. We provide next a more involved example.
	
	\begin{figure}[t]
		\mbox{\parbox{.48\textwidth}{
				\centering
				\includegraphics[width=\linewidth]{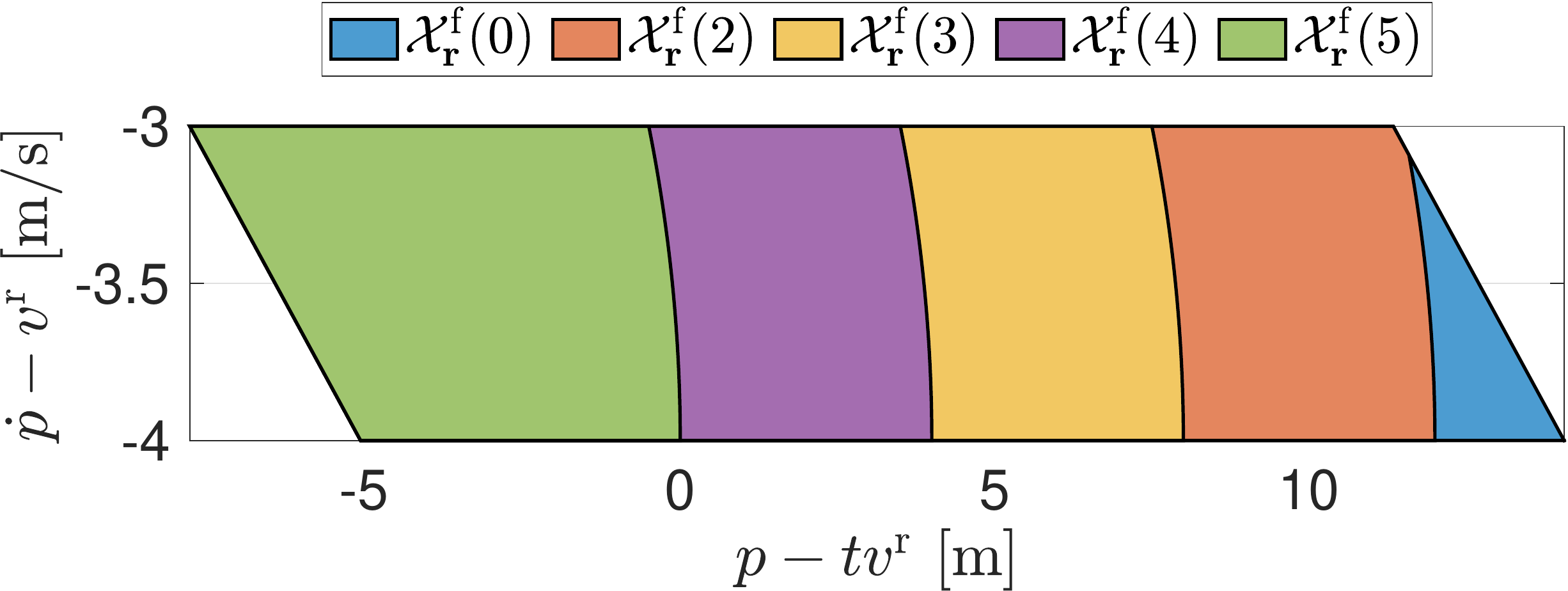}  
		}}
		\caption{Parametric stabilizing terminal safe set for the double integrator example. Note that $\mathcal{X}_\r^\mathrm{f}(\tau_+)\subseteq{}\mathcal{X}_\r^\mathrm{f}(\tau)$ for $\tau\leq \tau_+\leq 5$ s.}
		\label{fig:double_integrator_safe_terminal_set}
	\end{figure}

	\begin{figure*}[ht]
		\centering
		\includegraphics[width=\linewidth]{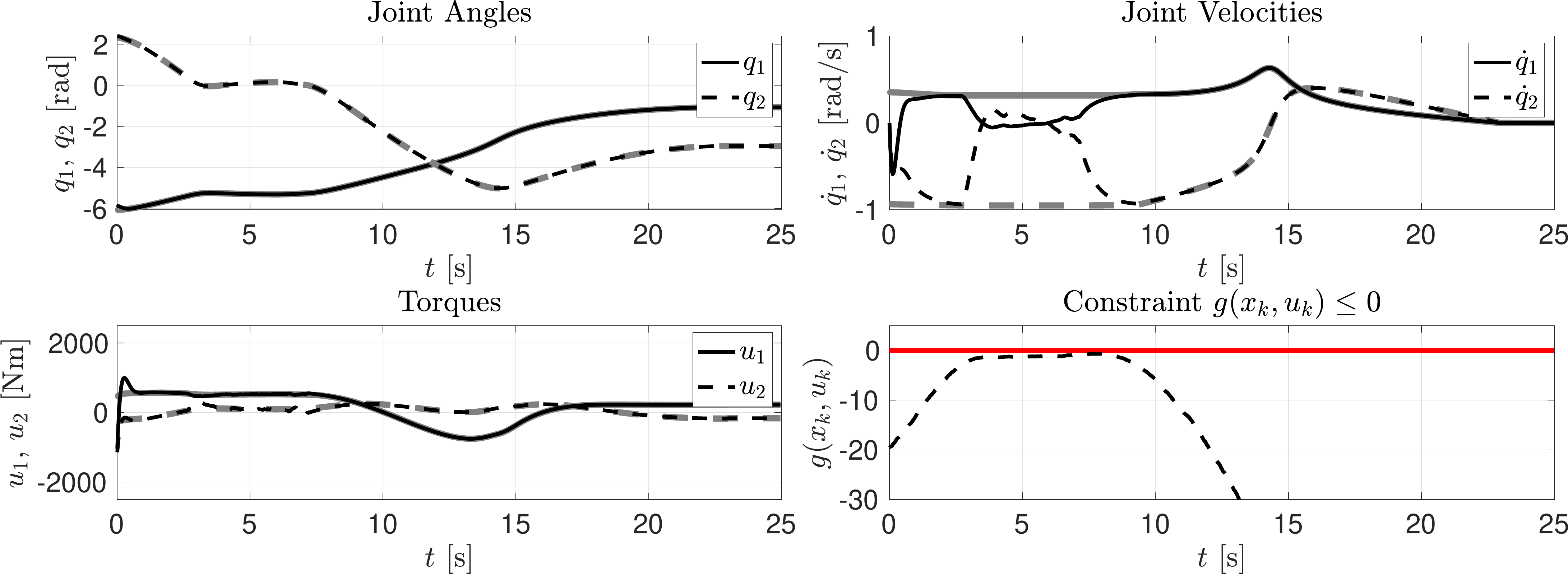}  
		\caption{Closed-loop trajectories for the robotic joint example. The gray lines denote the reference trajectory for each state and control, while the red lines denote constraint limits.}
		\label{fig:mpatc_1_states}
	\end{figure*}
	
	\begin{figure}[t]
		\centering
		\mbox{\parbox{.48\textwidth}{
				\centering
				\includegraphics[width=\linewidth]{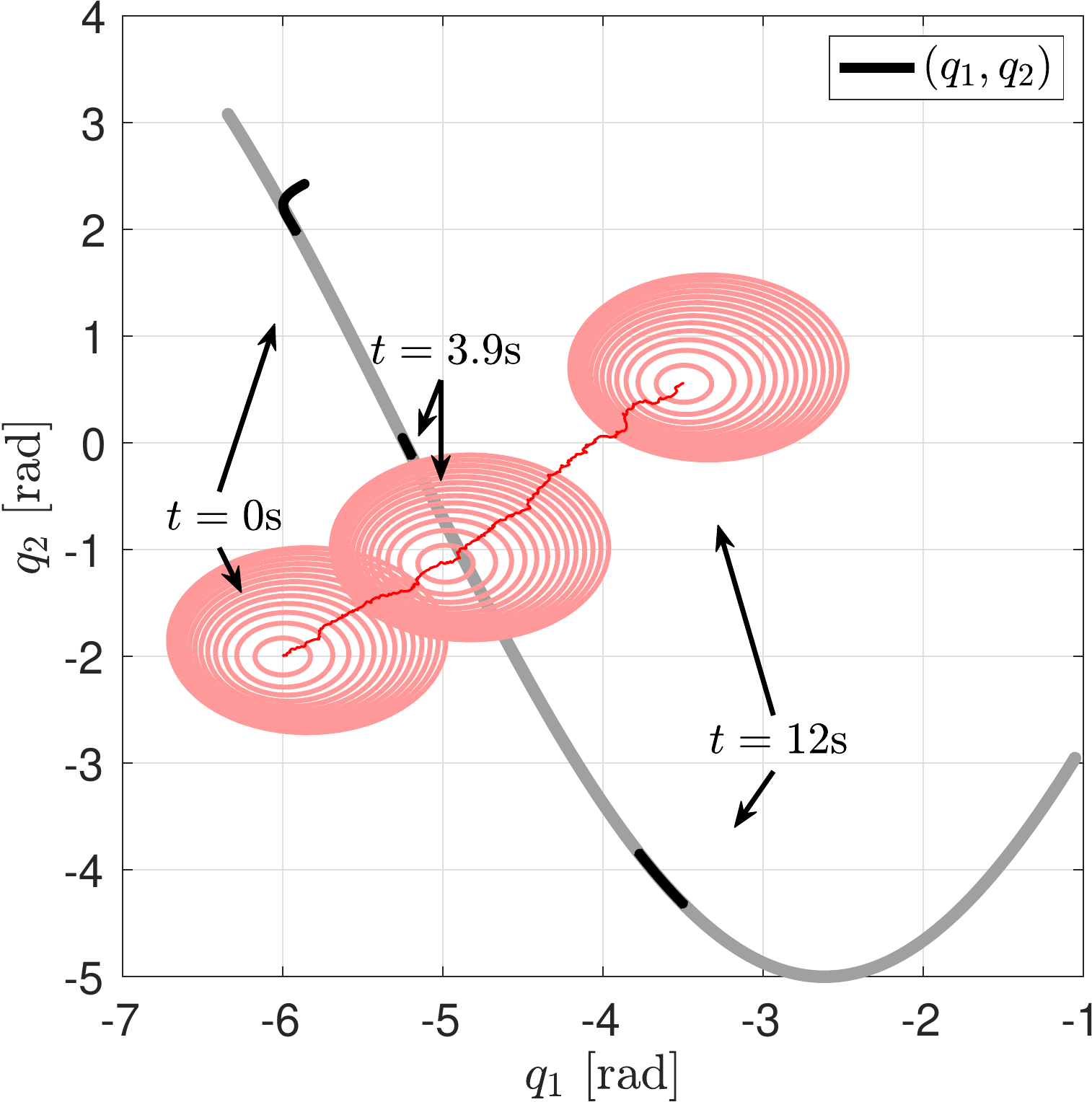}
		}}
		\caption{Open-loop predictions for the robotic joint example. The gray line denotes the reference, while the red circles show the growing uncertainty for different times. The red noisy line shows the closed-loop evolution of the uncertainty.}
		\label{fig:mpatc_1_q1q2}
	\end{figure}
	
	\subsection{Robotic Arm: Flexible Tracking with Obstacles}
	We consider a fully actuated planar robot with two degrees of freedom, no friction, nor external forces, and dynamics
	\begin{align}
	\matr{c}{\dot{x}_1\\\dot{x}_2} &= \matr{c}{ x_2\\B^{-1}(x_1)(u-C(x_1,x_2)x_2-g(x_1))},\label{eq:robot}
	\end{align}
	where $x_1=(q_1,q_2)$ are the joint angles, and $x_2=(\dot{q}_1,\dot{q}_2)$ the joint velocities. The full model description and the parameter values are given in Appendix \ref{appendix:model}.
	We consider the following box constraints on the state and control 
	\begin{align}\label{eq:box_constr}
	\|u\|_\infty\leq{}\bar{u}, && \|x_2\|_\infty\leq{}\bar{\dot{q}},
	\end{align}
	with $\bar{u}=4000$Nm and $\bar{\dot{q}}=(3/2)\pi$ rad/s.
	We consider
	\begin{equation}\label{eq:path}
	p(\theta)=\left (\theta-\frac{\pi}{3},\,5\sin\left (0.6 \left (\theta-\frac{\pi}{3}\right )\right )\right ),
	\end{equation}
	with $\theta\in[-5.3,0]$, as the desired path to be tracked and define the timing law 
	\begin{align*}
	\dot{\theta}(t) &= \frac{v_\mathrm{ref}(t) }{\left \| \nabla_\theta p(\theta(t))\right \|_2}, && v_\mathrm{ref}(t) = \left \{ 
	\begin{array}{@{}ll@{}}
	\phantom{-}1 & t\leq{}5\\
	\max(1-a t,0) & t>5
	\end{array}
	\right . ,
	\end{align*}
	with $a=-0.0734$. This predefined path evolution implies that the norm of the reference trajectory for the joint velocities will be  $1\ \mathrm{rad/s}$ for $t\leq{}5$, and monotonically decreasing for $t>5$ until reaching zero velocity.
	
	The state and input reference trajectories are given by
	\begin{align*}
	\r^\x(t) &= \matr{cc}{p(\theta(t))  & \frac{\partial{p}}{\partial\theta}\dot{\theta}(t)}^\top,\\
	\r^\u(t) &= \matr{c}{ B(x_{1})\ddot{p}(\theta(t))+C(x_1,x_2)x_{2}+g(x_1)}^\top,
	\end{align*}
	where the control reference follows from \eqref{eq:robot}. 
	For the cost we use $W = \mathrm{blockdiag}(Q,R)$ with
	\begin{align*}
	Q=\mathrm{diag}(10^5,10^5,10,10),\
	R=\mathrm{diag}(10^{-3},10^{-3}),\ 	w=10.
	\end{align*}
	The terminal cost matrix is given by $P=P_\eta$, and corresponding stabilizing set
	\begin{equation}\label{eq:ex_2_terminal}
	\mathcal{X}^\mathrm{s}_\r(\tb[k+n]) =\{ \xb[k+N]\, |\, \Delta\xb[k+N]^\top P_\eta\Delta\xb[k+N] \leq{} \gamma_*\},
	\end{equation}
	where the values and derivation of $P_\eta$ and $\gamma_*$ are given in Appendix \ref{appendix:terminal}. The safety set is	 constructed as
	\begin{align}
	\label{eq:robot_safe_set}
	\mathcal{X}_\mathrm{safe}(t)=\{(\dot{q}_1,\dot{q}_2)\,|\, \dot{q}_1=0,\dot{q}_2=0\} \cap \mathcal{X}_\r^\mathrm{s}(t),
	\end{align} and we use the terminal set $\mathcal{X}^\mathrm{f}_\r(\tb[k+N])$ given by \eqref{eq:stabilizing_safe_set} using~\eqref{eq:ex_2_terminal} and \eqref{eq:robot_safe_set}. 
	Since the safe set ensures a steady state, a minimal feasibility condition is that $\xb[k+N] = (p(\theta(\tb[k+N])),0,0)\in\mathcal{X}^\mathrm{f}_\r(\tb[k+N])$.
	
	We introduce the following time-varying uncertainty
	\begin{align}\label{eq:uncertain_model}
	\w_{k+1}= \omega(\w_{k},\xi) = \w_k + \matr{c}{0.3t_s\cos(\pi/4)+\xi_1\\0.3t_s\sin(\pi/4)+\xi_2},
	\end{align}
	where $\w\in\mathbb{R}^2$ is the uncertainty state and $\|(\xi_1,\xi_2)\|_2\leq{}0.03$ are bounded noise inputs. We construct the robust constraint under Assumption \ref{a:unknown_constraints} by propagating the expectation of \eqref{eq:uncertain_model}, and encapsulate the uncertainty with a circle of increasing radius $\delta{}\hat{r}=0.03$ and an initial uncertainty of $\hat{r}_0=0.03$. The constraint can then be expressed as
	\begin{align*}
	\gb[n][k](\xb,\ub) = \max_{\wb\in\Wb_{n|k}} \ \gamma_{n|k}(\xb,\ub,\wb)\leq{} 0,
	\end{align*}
	with 
	\begin{equation*}
	\gamma_{n|k}(\xb,\ub,\wb)=\hat{r}_{n|k}^2- (\xb[n][k]^{1,2}-\wb)^\top(\xb^{1,2}-\wb),
	\end{equation*}
	where $\xb^{1,2}=({q}_{1,n|k},{q}_{2,n|k})$ and $\hat{r}_{n|k}=\hat{r}_0+(n-k)\delta\hat{r}$. Generally, one could include a more complex environment uncertainty and prediction model, however, here we choose to use simple models to highlight the performance of our framework in a clear manner. 
	
	For the simulation, we set the sampling time to $0.03$s and use control intervals $N=25$, $M=50$, and $\w_0=(-6,-2)$ as initial value for the uncertainty. The initial value for $\tau_0$ is selected by projecting $(q_1,q_2)$ on the trajectory, i.e., $$\tau_0=\arg\min\|(q_1,q_2)-p(\theta(\tau))\|_2.$$

	Figure \ref{fig:mpatc_1_states} shows the closed-loop trajectories for the initial condition $(x_1,x_2)=(-5.86,2.43,0,0)$, $\tau=0.79$. The system is quickly stabilized to the reference; after $3$s the system deviates from the desired velocity reference due to the presence of an uncertainty. Tracking is temporarily lost for the velocities $(\dot{q}_1,\dot{q}_2)$ but to a lesser extent for the positions $(q_1,q_2)$. When the obstacle leaves, the system is again stabilized towards the reference. Figure \ref{fig:mpatc_1_q1q2} shows open-loop trajectories of the joint positions together with the moving obstacle at different times. The red circles represent the predicted uncertainty as a region in the joint space to be avoided. At times $t=0$s and $t=12$s the robotic arm can move freely, while for $t=3.9$s the robotic arm is forced to come close to a stop, and even reverse, in order to avoid the obstacle before being allowed to continue along the reference trajectory. \review{We observe that the simulation results remain indistinguishable if the initial auxiliary state $\tb[k][k]$ is free to be selected by the optimizer, as noted in Remark~\ref{remark:free_time}.} Finally, we recorded an average run time of $\hat{t}_\mathrm{MPFTC}=0.235$s. While the run time might seem large, we ought to stress that our implementation was not based on efficient real-time algorithms which are expected to be much faster.
	
	\section{Conclusions}\label{sec:conclusions}
	In this paper, we have introduced a new predictive control framework solving the problem of tracking reference trajectories, which might become infeasible in the presence of a-priori unknown constraints, while avoiding undesirably aggressive behaviors. We have discussed safety in a general sense, and provided new terminal conditions, which hold for reference trajectories with mild assumptions, and ensure safety and recursive feasibility with respect to a-priori unknown, time-varying constraints. Our proposed approach can be combined with existing motion planning techniques in order to deliver better performance. Future work will focus on practical real-time implementations of the framework in the context of urban autonomous driving, where moving obstacles are detected by the onboard sensors and are a-priori unknown. 
	
	\bibliographystyle{IEEEtran}
	\bibliography{references}
	\begin{IEEEbiography}[{\includegraphics[width=1in,height=1.25in,clip,keepaspectratio]{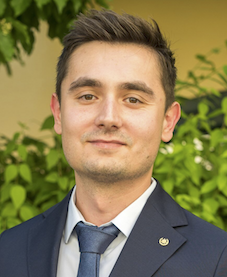}}]{Ivo Batkovic}
		received the Bachelor's and Master's degree in Engineering Physics from Chalmers University of Technology, Gothenburg, Sweden, in 2016. He is currently working towards the Ph.D degree at Chalmers and Zenseact AB, Gothenburg, Sweden. His research interests include model predictive control for constrained autonomous systems in combination with prediction models for safe decision making in autonomous driving applications.
	\end{IEEEbiography}

	\begin{IEEEbiography}[{\includegraphics[width=1in,height=1.25in,clip,keepaspectratio]{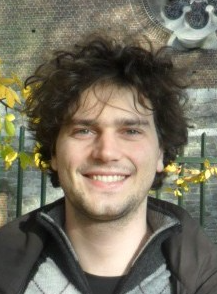}}]{Mario Zanon}
		received the Master's degree in Mechatronics from the University of Trento, and the Dipl\^{o}me d'Ing\'{e}nieur from the Ecole Centrale Paris, in 2010. After research stays at the KU Leuven, University of Bayreuth, Chalmers University, and the University of Freiburg he received the Ph.D. degree in Electrical Engineering from the KU Leuven in November 2015. He held a Post-Doc researcher position at Chalmers University until the end of 2017 and is now Assistant Professor at the IMT School for Advanced Studies Lucca. His research interests include numerical methods for optimization, economic MPC, reinforcement learning, and the optimal control and estimation of nonlinear dynamic systems, in particular for aerospace and automotive applications.
	\end{IEEEbiography}

	\begin{IEEEbiography}[{\includegraphics[width=1in,height=1.25in,clip,keepaspectratio]{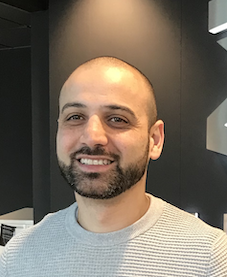}}]{Mohammad Ali}
		received his M.Sc degree in Electrical Engineering and Ph.D. degree in Mechatronics from Chalmers University of Technology in 2005 and 2012, respectively. He is currently responsible for Autonomous driving development at Zenseact AB. His research interests include, methods for safe decision making, trajectory planning with safety guarantees and verification of safety critical perception and control systems.  
	\end{IEEEbiography}

	\begin{IEEEbiography}[{\includegraphics[width=1in,height=1.25in,clip,keepaspectratio]{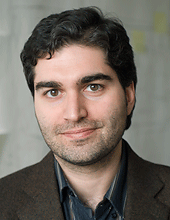}}]{Paolo Falcone}
		received his M.Sc. (“Laurea degree”) in 2003 from the University of Naples “Federico II” and his Ph.D. degree in Information Technology in 2007 from the University of Sannio, in Benevento, Italy. He is Professor at the Department of Electrical Engineering of the Chalmers University of Technology, Sweden. His research focuses on constrained optimal control applied to autonomous and semi-autonomous mobile systems, cooperative driving and intelligent vehicles, in cooperation with the Swedish automotive industry, with a focus on autonomous driving, cooperative driving and vehicle dynamics control.
	\end{IEEEbiography}
	
	\appendix
	
	\subsection{Model details}\label{appendix:model}
	
	The functions used in \eqref{eq:robot} are defined as
	\begin{subequations}\label{eq:modelparams}
		\begin{align}
		B(q) &:= \matr{cc}{b_1+b_2\cos(q_2) & b_3+b_4\cos(q_2)\\ 
			b_3+b_4\cos(q_2) & b_5},\\
		C(q,\dot{q}) &:= -c_1\sin(q_2)\matr{cc}{\dot{q}_1 & \dot{q}_1+\dot{q}_2\\
			-\dot{q}_1	&  0}\\
		g(q) &:= \matr{c}{g_1\cos(q_1)+g_2\cos(q_1+q_2)\\
			g_2\cos(q_1+q_2)},
		\end{align}
	\end{subequations}
	with all parameters given in Table \ref{tab:param}.

	\subsection{Terminal set computation}\label{appendix:terminal}
	In order to construct the terminal region, similarly to~\cite{Faulwasser2016}, we apply the following variable transformation
	\begin{gather}
	\eta_1 = x_1 - p(\theta(\tau)),\quad \eta_2 = x_2 - \frac{\partial{}p}{\partial\theta}\dot{\theta}(\tau),
	\end{gather}
	and rewrite \eqref{eq:robot} as a deviation from the trajectory
	\begin{equation}\label{eq:robot_error}
	\matr{c}{\dot\eta_1\\\dot\eta_2} = \matr{c}{ \eta_2\\ \alpha(\eta,\tau,u)},
	\end{equation}
	\begin{align*}
	\alpha(\eta,\tau,u) = &B^{-1}(\eta,\tau)\Big(u - C(\eta,\tau)(\eta_2+ \frac{\partial{}p}{\partial\theta}\dot{\theta}(\tau))\\
	&-g(\eta_1,\tau)\Big) - \frac{\partial{}^2p}{\partial\theta^2}\dot{\theta}^2(\tau)- \frac{\partial{}p}{\partial\theta}\ddot{\theta}(\tau).
	\end{align*}
	
	Exploiting this form, we define the terminal feedback law
	\begin{align}\label{eq:u_xi}
	\begin{split}
	u_\mathcal{X}(\eta,\tau) = &C(\eta,\tau)\Big(\eta_2+ \frac{\partial{}p}{\partial\theta}\dot{\theta}(\tau)\Big)+g(\eta_1,\tau)\\
	&+B(\eta_1,\tau)(-K_\eta\eta+\ddot{p}(\theta(\tau))),
	\end{split}
	\end{align}
	yielding closed-loop dynamics $\dot{\eta}=(A_\eta-B_\eta{}K_\eta)\eta$, with corresponding Lyapunov function
	\begin{equation}
	\label{eq:lyap_fcn}
	V(\eta) = \eta^\top{}P\eta,\quad P > 0.
	\end{equation}
	We define the terminal region $\mathcal{X}_\eta\in\mathbb{R}^4$ as level set of~\eqref{eq:lyap_fcn}, and use the following bounds
	\begin{gather*}
	\forall x \in \mathcal{X}:\,\left \|B(x_1)\right \|_2\leq{}\bar{B},\, \left \|C(x_1,x_2)\right \|_2\leq{}\bar{C},\,\|g(x_1)\|_2\leq{}\bar{g}.
	\end{gather*}
	The upper bounds  on $\|\dot{p}(\theta(t))\|_2\leq{}\bar{\dot{p}}$ and $\|\ddot{p}(\theta(t))\|_2\leq\bar{\ddot{p}}$ are given directly through the design of the timing law.
	\begin{table} \caption{Model parameters for robotic joint}
		\centering
		\begin{tabular}{|lll||lll|}
			\hline
			$b_1$ &  $200.0$ & [kg m$^2$ / rad] & $b_2$ & $50.0$ 	& [kg m$^2$ / rad]\\
			$b_3$ &  $23.5$ & [kg m$^2$ / rad] & $b_4$ & $25.0$ 	& [kg m$^2$ / rad]\\
			$b_5$ &  $122.5$ & [kg m$^2$ / rad] & $c_1$ & $-25.0$ 	& [Nms$^{-2}$]\\
			$g_1$ & $784.8$ & [Nm] & $g_2$ & $245.3$ & [Nm]\\
			\hline
		\end{tabular}
		\label{tab:param}
	\end{table}
	Furthermore, we tighten the state and input constraints
	\begin{align*}
	\|u\|_2 \leq{} \bar{u},&& \|x_2\|_2\leq{}\bar{\dot{q}}.
	\end{align*}
	In order to obtain a tightened bound on the terminal control input we impose
	\begin{equation}
	\label{eq:robot_terminal_u_bound}
	\|u_\mathcal{X}(\eta,\tau)\|_2 \leq{} \bar{C}\bar{\dot{q}}+\bar{g}+\bar{B}(\bar{\ddot{p}}+\|K_\eta\eta\|_2)\leq{}\bar{u},
	\end{equation}
	which in turn yields
	\begin{equation}
	\label{eq:robot_terminal_u_bound_2}
	\|\eta\|_2 \leq{} \frac{\bar{u}-\bar{C}\bar{\dot{q}}-\bar{g}-\bar{B}\bar{\ddot{p}}}{\bar{B}\|K_\eta\|_2}.
	\end{equation}
	Finally, the terminal set is given as the level set of the Lyapunov function $V(\eta)$
	\begin{equation}
	\mathcal{E}_\eta= \{\eta\in\mathbb{R}^4\, |\, \eta^\top{}P_\eta \eta\leq{}\gamma_*\},
	\end{equation}
	where parameter $\gamma_*$ maximizes the volume of the ellipsoid. Given $P_\eta$ and $K_\eta$, we use the S-procedure \cite{boyd2004convex} to formulate the maximization as convex optimization problem
	\begin{subequations}\label{eq:gamma}
		\begin{align}
		\gamma_*:=\max_{\gamma,\lambda}\quad & \gamma \\
		\mathrm{s.t.} \quad &\matr{cc}{P_\eta & \\ & -\gamma} - \lambda_1 \matr{cc}{I & \\ & -d_1 } \succeq 0,\label{eq:gamma_1} \\
		&\matr{cc}{P_\eta & \\ & -\gamma} - \lambda_2 \matr{cc}{\tilde{I}^0 & \\ & -d_2 } \succeq 0, \label{eq:gamma_2}\\
		& \lambda \geq 0,
		\end{align}
	\end{subequations}			
	where $\tilde{I}^0=\mathrm{blockdiag}(0,0,I)\in\mathbb{R}^4$ and
	\begin{align*}
	d_1=\frac{\bar{u}-\bar{C}\bar{\dot{q}}-\bar{g}-\bar{B}\bar{\ddot{p}}}{\bar{B}\|K_\xi\|_2}, && d_2 = \bar{\dot{q}}-\bar{\dot{p}}.
	\end{align*}
	Constraint \eqref{eq:gamma_1} ensures that~\eqref{eq:robot_terminal_u_bound_2} holds, i.e., the terminal control $u_\mathcal{X}(\xi,\eta)$ satisfies~\eqref{eq:robot_terminal_u_bound}; while constraint \eqref{eq:gamma_2} ensures that  $\|x_2\|_2\leq{}\bar{\dot{q}}$.
	To solve \eqref{eq:gamma}, we use the model data from Table \ref{tab:param} to get the bounds: $\bar{B}=266.4$, $\bar{C}=269.6$ and $\bar{g}=1058.9$. From the timing law we know that $\bar{\dot{p}}=1$ and $\bar{\ddot{p}}=0.823$. We compute the feedback matrix $K_\eta$ via an LQR controller with tuning $Q_\eta=I$, and $R_\eta=10I$ to get 
	\begin{equation}
	K_\eta=\matr{cc}{K_1 & K_2},
	\end{equation}
	where $K_1=0.31\cdot{}\mathrm{diag}(1,1)$ and $K_2=0.85\cdot{}\mathrm{diag}(1,1)$. Finally, the terminal cost is obtained by solving the Lyapunov equation
	\begin{equation*}
	P_\eta = (A_\eta-B_\eta{}K_\eta)^\top P_\eta (A_\eta-B_\eta{}K_\eta) + (Q + K_\eta^\top R K_\eta),
	\end{equation*}
	which gives us
	\begin{equation}
	P_\eta = 10^6\cdot{}\matr{cc}{P_1 &  P_2\\ P_2 & P_3},
	\end{equation}
	with $P_1=6.51\cdot{}\mathrm{diag}(1,1)$, $P_2=5.27\cdot{}\mathrm{diag}(1,1)$, $P_3=6.16\cdot{}\mathrm{diag}(1,1)$. Then, solving~\eqref{eq:gamma} using~\cite{lofberg2004yalmip}, we obtain
	\begin{equation}
	\mathcal{E}_\eta = \{\eta \in\mathbb{R}^4\, |\, \eta^\top P_\eta \eta\leq{}2.29\cdot{}10^7\}.
	\end{equation}
	
	\ifCLASSOPTIONcaptionsoff
	\newpage
	\fi

\end{document}